\newcommand{\calN}{\mathcal{N}}
\newcommand{\remain}{\mathrm{left}}
\newcommand{\single}{\mathrm{single}}
\newcommand{\pair}{\mathrm{pair}}
\renewcommand*{\le}{\leqslant}
\newcommand{\HALF}{\mathrm{HALF}}
\renewcommand*{\geq}{\geqslant}
\renewcommand*{\leq}{\leqslant}
\numberwithin{equation}{section}
\begin{document}
	\title{Hedonic Diversity Games}
	\author{Robert Bredereck}
	\affiliation{TU Berlin, Berlin, Germany}
	\author{Edith Elkind}
	\affiliation{University of Oxford, Oxford, U.K.}
	\author{Ayumi Igarashi}
	\affiliation{Kyushu University, Fukuoka, Japan}
\begin{abstract}
We consider a coalition formation setting where each agent belongs to one of the two types,
and agents' preferences over coalitions are determined
by the fraction of the agents of their own type in each coalition. This setting 
differs from the well-studied
Schelling's model in that some agents may prefer homogeneous coalitions, while others
may prefer to be members of a diverse group, or a group that mostly
consists of agents of the other type. We model this setting as a hedonic game
and investigate the existence of stable
outcomes using hedonic games solution concepts.
We show that a core stable outcome may fail to exist and checking the existence
of core stable outcomes is computationally hard. On the other hand, we propose an efficient algorithm
to find an individually stable outcome under the natural assumption that agents' preferences
over fractions of the agents of their own type are single-peaked. 
	\vspace{-10pt}
\end{abstract}

\maketitle

\section{Introduction}\label{sec:intro}
At a conference dinner, researchers split into groups to chat over food. Some junior researchers
prefer to stay in the company of other junior researchers, as they want to relax after a long
day of talks. Some senior researchers prefer to chat with their friends, who also happen 
to be senior researchers. But there are also junior researchers who want to use the dinner
as an opportunity to network with senior researchers, as well as senior researchers who are
eager to make the newcomers feel welcome in the community, and therefore want to talk
to as many junior people as possible. 
 
This example can be viewed as an instance of a coalition formation problem.
The agents belong to two types (senior and junior), and their preferences over coalitions
are determined by the fraction of agents of each type in the coalition.
This setting is reminiscent of the classic Schelling model~\cite{Schelling1969}, 
but there is an important difference: a standard assumption in the Schelling model is 
{\em homophily}, i.e., the agents are assumed to prefer to be surrounded by agents of their own
type, though they can tolerate the presence of agents of the other type, as long as their
fraction does not exceed a pre-specified threshold. In contrast, 
in our example some agents have homophilic preferences, 
while others have {\em heterophilic preferences}, 
i.e., they seek out coalitions with agents who are not like them. 

There is a very substantial body of research on homophily and heterophily in group formation.
It is well-known that in a variety of contexts, ranging from residential location~\cite{Schelling1969,Zhang2004a,Zhang2004b} to 
classroom activities and friendship relations~\cite{Moody2001,Newman2010}, 
individuals prefer to be together with those who are similar to them.
There are also settings where
agents prefer to be in a group with agents of the other type(s): for instance, in a coalition 
of buyers and sellers, a buyer prefers to be in a group with many other sellers
and no other buyers, so as to maximize their negotiating power. 
\citet{Aziz:2014:FHG,AzizBBHOP17} model this scenario as a {\em Bakers and Millers game}, 
where a baker wants to be in a coalition with many millers, whereas a miller wants to be in a coalition
with many bakers. However, there are also real-life scenarios
where agents can have different attitudes towards diversity:
this includes, for instance, language learning by immersion (with types being learners' native languages), 
shared accommodation (with types being genders), primary and secondary education
(with types being races and income groups), etc. In all these settings we expect the agents
to display a broad range of preferences over ratios of different types in their group.

\smallskip

\noindent{\bf Our contribution\ }
The goal of our paper is to provide strategic foundations 
for the study of coalition formation scenarios where each agent 
may have a different degree of homophily. More concretely, we consider settings where agents are divided 
into two types (blue and red), and each agent has preferences regarding the fraction of the agents of her own type,
which determine her preferences over coalitions. For most of our results, we assume that
agents' preferences are single-peaked, i.e., each agent has a preferred ratio $\theta$ of agents
of her own type, and prefers one fraction $\theta_1$ to another fraction 
$\theta_2$ if $\theta_1$ is closer to $\theta$ than $\theta_2$ is. 
Our model allows agents to express a variety of preferences including both complete homophily and 
complete heterophily. 

We model this setting as a hedonic game, and investigate the existence of stable outcomes
according to several established notions of stability, such as core stability, Nash stability 
and individual stability~\cite{Bogomolnaia2002,Banerjee2001}. 

We demonstrate that a core stable outcome may fail to exist, 
even when all agents have single-peaked preferences. Moreover, 
we show that deciding whether a core stable outcome exists is NP-complete.
However, we identify several interesting special cases where the core is guaranteed to be non-empty. 

We then consider stability notions that are defined in terms of individual deviations.
There is a simple reason why a Nash stable outcome may fail to exist. However, 
we propose an efficient algorithm to reach an individually stable outcome, i.e., 
an outcome where if some agent would like to deviate from her current coalition 
to another coalition, at least one agent in the target coalition would object to the move.
Our proof employs a careful and non-trivial adaptation of the algorithm of \citet{Bogomolnaia2002}
for single-peaked anonymous games.
Our algorithm is {\em decentralized} in the sense that, 
by following a certain set of rules, the agents can form a stable partition by themselves.

\smallskip

\noindent{\bf Related work\ }
Our work is related to the established line of research that studies 
the impact of homophily on residential segregation. The seminal paper 
of Thomas Schelling~\cite{Schelling1969} introduced a model of 
residential segregation in which two types of individuals are located 
on a line, and at each step a randomly chosen 
individual moves to a different location if the fraction of the 
like-minded agents in her neighborhood is below her tolerance ratio. 
With simple experiments using dimes and pennies, \citet{Schelling1969} 
found that any such dynamics almost always reaches a total segregation even 
if each agent only has a mild preference for her own type.

Following numerous papers empirically confirming 
Shelling's result \cite{Farley1993,Massey1994,Emeerson2001,Doff2007,ClarkFossett2008,Clark2008,Alba1993}, \citet{Young} was the first to provide a rigorous theoretical argument, 
showing that stability can only be achieved if agents are divided into homogeneous groups. 
In contrast, \citet{Brandt:stoc} showed that with tolerance parameter being 
exactly $\frac{1}{2}$, the average size of the monochromatic community is independent of the size of 
the whole system; subsequently, \citet{Immorlica:soda} extended this analysis to the two-dimensional 
grid. The recent work of \citet{ChauhanLM18} considers a more general utility function, taking into account 
both the fraction of the like-minded agents in the neighborhood and the location in which each agent
is situated; they investigate the existence of stable outcomes for several topologies, such as a grid 
and a ring.

We note, however, that our model is fundamentally different from  
Shelling's model, for at least three reasons. First, we do not assume any underlying topology 
that restricts coalition formation. Second, the coalitions in an outcome of a hedonic game 
are pairwise disjoint, while the neighborhoods in the Schelling model may overlap.
Finally, as argued above, our model does not assume homophilic preferences.

There is also a substantial literature on stability in hedonic games, starting with the early work
of \citet{Bogomolnaia2002}. Among the various classes of hedonic games, two classes
are particularly relevant for our analysis: fractional hedonic games~\cite{Aziz:2014:FHG,AzizBBHOP17}
and anonymous hedonic games \cite{Bogomolnaia2002}.

In fractional hedonic games the agents are located on a social network, 
and they prefer a coalition $C$ to a coalition $C'$
if the fraction of their friends in $C$ is higher than in $C'$.
The Bakers and Millers game is an example of a fractional hedonic game, where it is assumed
that each baker is a friend of each miller, but no two agents of the same type are friends. 
\citet{AzizBBHOP17,Aziz:2014:FHG}, and, subsequently, \citet{bilo1,bilo2} identify several special cases of 
fractional hedonic games where the set of core stable outcomes is non-empty.
In particular, \citet{Aziz:2014:FHG,AzizBBHOP17} give a characterization 
of the set of strictly core stable outcomes in the Bakers and Millers game.

In anonymous hedonic games agents' preferences over coalitions
depend on the size of these coalitions only. 
Similarly to our setting, it is known that with single-peaked anonymous preferences, 
there is a natural decentralized process to reach individual stability; 
however, a core stable outcome may fail to exist \cite{Bogomolnaia2002}, 
and deciding the existence of a core stable outcome is NP-complete \cite{Ballester2004}.

There are also other subclasses of hedonic games where stable outcomes are guaranteed to exist, such as 
acyclic hedonic games \cite{Demange2004,igarashi2016hedonicgraph}, dichotomous games 
\cite{peters2016dichotomous}, and top-responsive games \cite{Alcade2004}.


\section{Our Model}
For every positive integer $s$, we denote by $[s]$ the set $\{1, \dots, s\}$.
We start by defining the class of games that we are going to consider.

\begin{definition}
A {\em diversity game} is a triple $G = (R,B,(\succ_i)_{i \in R\cup B})$, 
where $R$ and $B$ are disjoint sets of {\em agents} 
and for each agent $i\in R\cup B$ it holds that
$\succ_i$ is a linear order over the set
$$
\Theta = \left\{\frac{r}{s}\mathrel{\Big|}  r\in\{0, 1, \dots, |R|\}, 
s\in\{1, \dots, |R|+|B|\} \right\}.
$$
We set $N=R\cup B$; the agents in $R$ are called the {\em red agents},
and the agents in $B$ are called the {\em blue agents}.
\end{definition}
We refer to subsets of $N$ as {\em coalitions}.
For each $i \in N$, we denote by $\calN_i$ the set of coalitions containing $i$. 
For each coalition $S \subseteq N$, 
we say that $S$ is {\em mixed} if it contains both red and blue agents; 
a mixed coalition $S$ is called a {\em mixed pair} if $|S|=2$. 

For each agent $i\in N$, we interpret the order $\succ_i$ as her preferences
over the fraction of the red agents in a coalition; for instance, if $\frac{2}{3}\succ_i \frac{3}{5}$,
this means that agent $i$ prefers a coalition in which two thirds of the agents are red
to a coalition in which three fifths of the agents are red. 

For each coalition $S$, we denote by $\theta_R(S)$ the fraction of the 
red agents in $S$, i.e., $\theta_R(S)=\frac{|S \cap R|}{|S|}$;
we refer to this fraction as the {\em red ratio of $S$}. 
For each $i \in N$ and $S, T \in \calN_i$, we say that 
agent $i$ {\em strictly prefers} $S$ to $T$ if 
$\theta_R(S) \succ_i \theta_R(T)$, and we say that 
$i$ {\em weakly prefers} $S$ to $T$ if $\theta_R(S)=\theta_R(T)$ 
or $\theta_R(S) \succ_i \theta_R(T)$. 
A coalition $S$ is said to be {\em individually rational} 
if every agent $i$ in $S$ weakly prefers $S$ to $\{i\}$.

An {\em outcome} of a diversity game 
is a partition of agents in $N$ into disjoint coalitions. 
Given a partition $\pi$ of $N$ and an agent $i \in N$, 
we write $\pi(i)$ to denote the unique coalition in $\pi$ that contains $i$. 
A partition $\pi$ of $N$ is said to be {\it individually rational} 
if all coalitions in $\pi$ are individually rational. 

The {\em core} is the set of partitions that are resistant to {\em group deviations}. 
Formally, we say that a coalition $S \subseteq N$ {\it blocks} 
a partition $\pi$ of $N$ if every agent $i \in S$ strictly prefers $S$ to her own coalition $\pi(i)$.
A partition $\pi$ of $N$ is said to be {\it core stable}, 
or in the {\em core}, if no coalition $S \subseteq N$ blocks $\pi$.

We also consider outcomes that are immune to {\em individual deviations}. 
Consider an agent $i \in N$ and a pair of coalitions 
$S\not\in\calN_i$ and $T\in\calN_i$. An agent $j \in S$ {\it accepts} 
a deviation of $i$ to $S$ if $j$ weakly prefers $S\cup \{i\}$ to $S$. 
A deviation of $i$ from $T$ to $S$ is said to be an {\em NS-deviation} 
if $i$ prefers $S \cup \{i\}$ to $T$; and an {\it IS-deviation} 
if it is an NS-deviation and all agents in $S$ accept it.
A partition $\pi$ is called {\em Nash stable} (NS) (respectively, {\em individually stable} (IS)) 
if no agent $i\in N$ has an NS-deviation
(respectively, an IS-deviation) from $\pi(i)$ to another coalition $S\in \pi$ or to $\emptyset$.

We say that the preferences $\succ_i$ of an agent $i \in N$ are {\em single-peaked} 
if for every $i \in N$ there is a peak $p_i \in [0,1]$ such that
\[
\theta_1 < \theta_2 \leq p_i~\mbox{or}~\theta_1 > \theta_2 \geq p_i \Rightarrow  \theta_2 \succ_i \theta_1. 
\]
In particular, if an agent has a strong preference for being in the majority, 
then her preferences are single-peaked, as illustrated in the following example.

\begin{example}[Birds of a feather flock together]\label{ex:homo}
Suppose that all agents in $R$ are smokers and all agents in $B$ are non-smokers.
Then we expect an agent to prefer groups with the maximum possible ratio
of agents of her own type. Formally, 
for each $r\in R$ and each $\theta, \theta'\in\Theta$
we have $\theta\succ_r \theta'$ if and only if $\theta>\theta'$, and
for each $b\in B$ and each $\theta, \theta'\in\Theta$
we have $\theta\succ_b \theta'$ if and only if $\theta<\theta'$. 
In this case, the partition in which each agent forms a singleton
coalition is core stable and Nash stable (and hence also individually stable).
\end{example}

If an agent strongly prefers to be 
surrounded by agents of the other type, her preferences are single-peaked, too.
\begin{example}[Bakers and Millers \citep{Aziz:2014:FHG}]\label{ex:hetero}
Suppose that each agent prefers the fraction of agents of the other type to be as high as possible. 
This holds, for instance, when individuals of the same type compete to trade with 
individuals of the other type. A {\em Bakers and Millers game} 
is a diversity game where for each $r\in R$ and each $\theta, \theta'\in\Theta$
we have $\theta\succ_r \theta'$ if and only if $\theta<\theta'$, and
for each $b\in B$ and each $\theta, \theta'\in\Theta$
we have $\theta\succ_b \theta'$ if and only if $\theta>\theta'$. 
Note that if $|R|=|B|$, a partition into mixed pairs is core stable and Nash stable;
indeed, \citet{Aziz:2014:FHG} prove that every Bakers and Millers game has a non-empty core.
\end{example}

\section{Core stability}
Examples~\ref{ex:homo} and~\ref{ex:hetero} illustrate that if all agents have 
extreme homophilic or extreme heterophilic preferences, the core is guaranteed to be non-empty.
However, we will now show that in the intermediate case the core may be empty, even if
all agents have single-peaked preferences.

\begin{example}\label{ex:emptycore}
Consider a diversity game $G$, where
the set of agents is given by 
$R=\{r_1, r_2,r_3, r_4,r_5,r_6,r_7\}$ and $B=\{b_1,b_2\}$. 
Agents can be divided into the following three categories 
with essentially the same preferences: 
$X=\{r_1, r_2,r_3, r_4,b_1\}$, 
$Y=\{r_5\}$, and 
$Z=\{r_6,r_7,b_2\}$. 
We have $\Theta=\{0, \frac{1}{3}$, $\frac{1}{2}$, 
$\frac{3}{5}$, 
$\frac{2}{3}$, $\frac{5}{7}$, 
$\frac{3}{4}$, $\frac{7}{9}$, $\frac{4}{5}$, $\frac{5}{6}$, $\frac{6}{7}$, $\frac{7}{8}, 1\}$. 
Each agent has the following single-peaked preferences over the ratios of red agents:
\begin{itemize}
\item $r_1, r_2,r_3, r_4: \frac{6}{7} \succ \frac{5}{6}  \succ \frac{4}{5} \succ \frac{7}{9} \succ \frac{3}{4}  \succ \frac{7}{8} \succ 1 \succ \frac{5}{7} \succ \cdots$
\item $b_1: \frac{6}{7} \succ \frac{5}{6}  \succ \frac{4}{5} \succ \frac{7}{9} \succ \frac{3}{4} \succ \frac{7}{8} \succ \frac{5}{7} \succ \cdots$
\item $r_5: \frac{5}{6} \succ \frac{4}{5} \succ \frac{7}{9} \succ \frac{3}{4} \succ \frac{6}{7} \succ \frac{7}{8}  \succ 1\succ \frac{5}{7} \succ \cdots$
\item $b_2: \frac{3}{4} \succ \frac{7}{9} \succ \frac{4}{5}  \succ \frac{5}{6} \succ \frac{6}{7} \succ \frac{7}{8} \succ \frac{5}{7} \succ \cdots $
\item $r_6,r_7: \frac{3}{4} \succ \frac{7}{9} \succ \frac{4}{5}  \succ \frac{5}{6} \succ \frac{6}{7} \succ \frac{7}{8} \succ 1 \succ \frac{5}{7} \succ \cdots $
\end{itemize}
Figure $1$ illustrates the preferences of each preference category of red agents. 
\end{example}

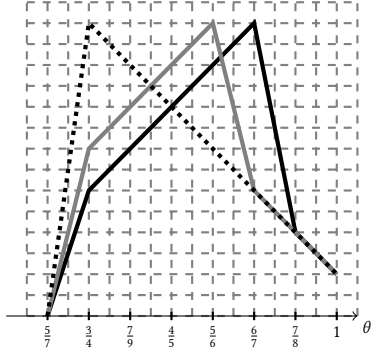
\begin{figure}[htb]\label{fig:singlepeak}
\begin{center}
\begin{tikzpicture}[scale=0.55, transform shape]
\draw[thick,color=gray,step=.5cm, dashed] (-0.5,0) grid (7.5,7.5);

\draw [thick] (0,-.1) node[below]{\Large $\frac{5}{7}$} -- (0,0.1);
\draw [thick] (1,-.1) node[below]{\Large $\frac{3}{4}$} -- (1,0.1);
\draw [thick] (2,-.1) node[below]{\Large $\frac{7}{9}$} -- (2,0.1);
\draw [thick] (3,-.1) node[below]{\Large $\frac{4}{5}$} -- (3,0.1);
\draw [thick] (4,-.1) node[below]{\Large $\frac{5}{6}$} -- (4,0.1);
\draw [thick] (5,-.1) node[below]{\Large $\frac{6}{7}$} -- (5,0.1);
\draw [thick] (6,-.1) node[below]{\Large $\frac{7}{8}$} -- (6,0.1);
\draw [thick] (7,-.1) node[below]{\Large $1$} -- (7,0.1);

\draw[->] (-1,0) -- (7.5,0) node[below right] {\Large $\theta$};
\draw[-,color=black,ultra thick] (0,0) -- (1,3) -- (2,4) -- (3,5) -- (4,6) -- (5,7) -- (6,2) -- (7,1);
\draw[-,color=gray,ultra thick] (0,0) -- (1,4) -- (2,5) -- (3,6) -- (4,7) -- (5,3) -- (6,2) -- (7,1);
\draw[-,color=black,ultra thick,dotted] (0,0) -- (1,7) -- (2,6) -- (3,5) -- (4,4) -- (5,3) -- (6,2) -- (7,1);
\end{tikzpicture}
\caption{Single-peaked preferences over the ratios of red agents. 
The thick, grey, and dotted lines represent the preferences of agents in $X$, $Y$, and $Z$, 
respectively. We omit the ratios $\frac{1}{3}, \frac{1}{2}, \frac{2}{3}$ 
since there is no individually rational coalition with these fractions of red agents.}
\end{center}
\end{figure}

\noindent We will now argue that the game in Example~\ref{ex:emptycore} has empty core.
\begin{proposition}
The game $G$ in Example~\ref{ex:emptycore} has no core stable outcomes. 
\end{proposition}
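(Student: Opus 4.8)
The plan is to assume for contradiction that some partition $\pi$ is core stable, hence individually rational, and then to exhibit a blocking coalition in every case. The first step is to pin down the shape of individually rational coalitions. Reading off the lists, every red agent strictly prefers the all-red ratio $1$ to every ratio at most $\frac{5}{7}$ (each list reads $\dots \succ 1 \succ \frac{5}{7} \succ \cdots$, and single-peakedness pushes all smaller ratios below $\frac57$). Hence no red agent can sit at a ratio $\le \frac{5}{7}$ in an individually rational coalition, so any mixed coalition has red ratio in $\{\frac{3}{4},\frac{7}{9},\frac{4}{5},\frac{5}{6},\frac{6}{7},\frac{7}{8}\}$. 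Inspecting these, each is realized only by a coalition with one blue agent and $3$–$7$ reds (ratios $\frac34,\frac45,\frac56,\frac67,\frac78$) or with two blue agents and $6$ or $7$ reds (ratios $\frac34,\frac79$); in every case the coalition has at least three red agents. Since only two blue agents exist, $\pi$ has at most two mixed coalitions, and every other coalition is all-red (ratio $1$) or all-blue (ratio $0$).

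The heart of the argument is a single master blocking coalition $T=\{r_1,r_2,r_3,r_4,r_5,b_1\}$, of ratio $\frac{5}{6}$. Since $\frac{5}{6}$ is $r_5$'s peak and the $X$-agents (including $b_1$) rank only $\frac{6}{7}$ above it, $T$ blocks $\pi$ unless some member of $T$ already weakly prefers its current coalition to $\frac{5}{6}$ — which can happen only if $\pi$ contains a coalition of ratio $\frac{5}{6}$ or $\frac{6}{7}$ meeting $\{r_1,\dots,r_4,b_1\}$, or a coalition of ratio $\frac56$ containing $r_5$. Such a coalition is necessarily a $(5,1)$ or $(6,1)$ red–blue coalition, occupying one blue agent and five or six reds. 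By the counting above this leaves at most two reds, too few for a second mixed coalition, so the other blue agent is in no mixed coalition and (an all-blue coalition would need both blues) must be a singleton. Thus, whenever $T$ fails to block, $\pi$ collapses to a rigid shape: one mixed coalition $M$ of ratio $\frac{5}{6}$ or $\frac{6}{7}$, one blue singleton, and the one or two leftover reds in all-red coalitions.

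It then remains to block these residual configurations. If $M$ has ratio $\frac{6}{7}$, I would use $\{r_6,r_7,b_2,\rho\}$ of ratio $\frac{3}{4}$: since $\frac34$ is the peak of $b_2,r_6,r_7$ and beats $\frac67$, $1$ and $0$ for them, they all strictly improve, and a third red $\rho$ that also strictly improves always exists (a leftover red at ratio $1$, or $r_5$ inside $M$, recalling $\frac34\succ_{r_5}\frac67$). If $M$ has ratio $\frac{5}{6}$, the same $\frac{3}{4}$-deviation works unless the five reds of $M$ are exactly $\{r_1,\dots,r_5\}$ with $r_6,r_7$ leftover; in that sole remaining case I would instead use $\{r_1,r_2,r_3,r_4,b_1,r_6,r_7\}$ of ratio $\frac{6}{7}$, where the four $X$-reds and $b_1$ improve from $\frac{5}{6}$ (or from the singleton ratio $0$) and $r_6,r_7$ improve from $1$. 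Together with the master block this covers all configurations and yields the contradiction.

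The step I expect to be the main obstacle is the residual analysis of the last paragraph: there the blocking coalition is not a single fixed set but depends on which category the recruited reds belong to, because an agent's willingness to move hinges on its peak ($\frac67$ for $X$, $\frac56$ for $r_5$, $\frac34$ for $Z$) relative to its current ratio, so one must track how the seven reds split between $M$ and the leftovers. The conceptual reason a contradiction is unavoidable is that the three categories' ideal ratios $\frac{6}{7},\frac{5}{6},\frac{3}{4}$ are mutually incompatible given only seven reds and two blues; whichever group is momentarily satisfied leaves enough dissatisfied agents — in particular $r_6,r_7$, who welcome a $\frac34$-coalition from any position, and $r_5$, who welcomes a $\frac56$-coalition — to assemble a profitable deviation.
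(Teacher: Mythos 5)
Your proof is correct and follows essentially the same route as the paper: individual rationality forces every mixed coalition to have red ratio at least $3/4$, the coalition $X\cup Y=\{r_1,\dots,r_5,b_1\}$ of ratio $5/6$ is the pivotal block, a counting argument (at least $3$ reds per mixed coalition, only $7$ reds available) leaves a single mixed coalition, and the final contradictions come from $Z\cup\{\rho\}$ at ratio $3/4$ and $X\cup\{r_6,r_7\}$ at ratio $6/7$ --- exactly the blocking coalitions the paper uses. The only difference is organizational: you lead with the master block and then dispatch the rigid residual configurations, whereas the paper progressively pins down the maximum mixed ratio $\theta^*$ (showing $\theta^*=5/6$, that $r_5$ and all of $X\cap R$ sit in the unique mixed coalition) before applying the same two final blocks.
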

\begin{proof}
Suppose towards a contradiction that there exists a core stable outcome $\pi$. We note that, 
by individual rationality, $\theta_R(S) \geq 3/4$ for every mixed coalition $S \in \pi$, 
as red agents in a coalition $S$ with $\theta_R(S)\le 5/7<3/4$ would strictly prefer to be alone. 
Also, $\pi$ contains at least one mixed coalition, 
as otherwise $\{r_1, r_2, r_3, b_1\}$ would block $\pi$.

Let $\theta^*$ be the largest ratio of red agents in a mixed coalition in $\pi$. 
We have argued that $\theta^*\geq 3/4$. 
Also, if $3/4 \leq \theta^* \leq 4/5$, 
then $\theta_R(\pi(x)) \leq 4/5$ for all $x \in X$ 
and $3/4 \leq \theta_R(\pi(r_5)) \leq 4/5$ or 
$\theta_R(\pi(r_5))=1$; thus coalition $X \cup Y$ with $\theta_R(X\cup Y) = 5/6$ blocks $\pi$. 
Hence, $\theta^* \geq 5/6$. Now since $\pi$ contains at least one mixed coalition 
with red ratio at least $5/6$, and all mixed coalitions in $\pi$ must have red ratio at least $3/4$, 
if there is more than one mixed coalition, the number of agents would be at least 10, 
a contradiction. Thus, $\pi$ contains exactly one mixed coalition of red ratio at least $5/6$. 
It follows that for each agent $i$ it holds that either
\begin{itemize}
\item $i$ belongs to a mixed coalition of red ratio at least $5/6$, 
  i.e., $\theta_R(\pi(i)) \geq 5/6$; or
\item $i$ belongs to a completely homogeneous coalition, i.e., if $i \in Y$, 
  then $\theta_R(\pi(i))=0$ and if $i \in R$, then $\theta_R(\pi(i))=1$. 
\end{itemize}
In particular, this means that $\theta_R(\pi(z)) \neq 3/4$ for all $z \in Z$, 
and hence each $z \in Z$ prefers $3/4$ to $\theta_R(\pi(z))$.

Now suppose that $r_5$ does not belong to a coalition with his favorite red ratio, i.e., $5/6$. 
Then $\theta_R(\pi(r_5)) \geq 6/7$ and thus $r_5$ prefers $3/4$ to $\theta_R(\pi(r_5))$. 
Thus, the coalition $Y \cup Z$ of red ratio $3/4$ blocks $\pi$, a contradiction. 
Hence, $r_5$ belongs to a coalition of his favorite red ratio $5/6$, and thus $\theta^*=5/6$. 
Further, if some agent $r \in X\cap R$ does not belong to a mixed coalition, 
then $\theta_R(\pi(r))=1$ and the coalition $\{r\} \cup Z$ would block $\pi$. 
Hence, the unique mixed coalition of red ratio $5/6$ contains both $r_5$ 
and all four red agents in $X$, which means that no red agent in $Z$ 
belongs to the mixed coalition. Now we have: 
\begin{itemize}
\item $b_1 \in X \cap B$ prefers $6/7$  to $\theta_R(\pi(b_1))$ since 
   $\theta_R(\pi(b_1))=0$ or $\theta_R(\pi(b_1))=5/6$; and
\item each $x \in X \cap R$ prefers $6/7$ to $\theta_R(\pi(x))=5/6$; and
\item each $z \in Z \cap R$ prefers $6/7$ to $\theta_R(\pi(z))=1$. 
\end{itemize}
Then coalition $X\cup (Z\cap R)$ of red ratio $6/7$ would block $\pi$, a contradiction. 
We conclude that $G$ does not admit a core stable partition.
\end{proof}

Indeed, we can show that checking whether a given diversity game has a non-empty core
is NP-complete.

\begin{theorem}\label{thm:core-np}
The problem of checking whether a diversity game $G=(R, B, (\succ_i)_{i\in R\cup B})$ 
has a non-empty core is {\em NP}-complete. 
\end{theorem}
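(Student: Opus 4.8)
The plan is to establish NP-completeness by proving membership in NP and then exhibiting a reduction from a known NP-hard problem. Membership is the easy direction: a non-empty core is witnessed by a partition $\pi$, and I can verify in polynomial time that no coalition blocks it. The subtle point is that there are exponentially many candidate coalitions $S$, but because each agent's preference depends only on the red ratio $\theta_R(S)$, the relevant information about a coalition is just the pair $(|S\cap R|, |S\cap B|)$, of which there are only $O(|N|^2)$ possibilities. For each such pair I can greedily try to assemble a blocking coalition by selecting, among the red and blue agents who would strictly prefer that ratio to their current coalition, whether enough of each color exist; so the verification that $\pi$ is core stable runs in polynomial time, placing the problem in NP.

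**The main work is the hardness reduction.** I would reduce from a combinatorial problem whose structure mirrors the delicate ``cycle of blocking coalitions'' exhibited in Example~\ref{ex:emptycore}, where the emptiness of the core arose from three agent categories $X$, $Y$, $Z$ whose most-preferred feasible ratios cyclically undercut one another. A natural source is a partition- or packing-type NP-hard problem (for instance an exact-cover or numerical matching flavor), where ``yes'' instances admit a clean packing into coalitions hitting everyone's acceptable ratios, while ``no'' instances force some agent into a coalition she can profitably abandon, triggering a blocking gadget analogous to the one in the Example. The idea is to build, for each element/constraint of the source instance, a bundle of red and blue agents with carefully tailored single-peaked preferences, so that a core stable outcome exists \emph{if and only if} the source instance is solvable.

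**The hard part will be** engineering the preference profiles and the counts of red and blue agents so that (i) the only individually rational mixed coalitions are exactly those corresponding to legal choices in the source instance, and (ii) whenever the source instance has no solution, at least one of the empty-core gadgets of Example~\ref{ex:emptycore} can always be embedded to produce a blocking coalition, while (iii) a genuine solution of the source instance closes off \emph{all} such blocking moves simultaneously. Controlling the realizable red ratios is especially delicate: since $\Theta$ consists of fractions $r/s$ with bounded numerator and denominator, I must choose group sizes so that the ratios that matter for the gadget are well-separated from spurious ratios that an adversarial blocking coalition might exploit. I expect to need ``filler'' or ``ratio-fixing'' agents of each color whose sole role is to pin down the attainable denominators.

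**Finally**, I would prove the two directions of correctness. For the forward direction, given a solution to the source instance I construct an explicit partition and verify, gadget by gadget, that no coalition blocks it—here the single-peakedness is used to argue that every agent is either at her peak or blocked from improving by the absence of willing partners of the complementary color. For the reverse direction, I assume a core stable $\pi$ exists and extract a solution to the source instance, arguing (as in the Proposition above) that individual rationality and the threat of the empty-core gadget force $\pi$ to respect the combinatorial constraints encoded by the reduction. The reverse direction is where the analysis of Example~\ref{ex:emptycore} is reused most heavily, so I would aim to make the gadget modular enough that its empty-core argument transfers with only bookkeeping changes.
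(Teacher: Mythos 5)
Your NP membership argument is sound and essentially identical to the paper's: since preferences depend only on the red ratio, it suffices to enumerate the polynomially many pairs $(r,s)$, collect the agents who strictly prefer $\frac{r}{s}$ to their current ratio, and check whether that set contains at least $r$ red and $s-r$ blue agents. That part is fine.

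The hardness direction, however, is a genuine gap: you have described a \emph{plan} for a reduction, not a reduction. You never fix the source problem (``a partition- or packing-type NP-hard problem'' of ``exact-cover or numerical matching flavor''), never specify the gadgets, and you yourself identify the three engineering obstacles (controlling which ratios are individually rational, embedding the empty-core gadget in ``no'' instances, and closing off all blocking moves in ``yes'' instances) without resolving any of them. These obstacles are precisely what makes a from-scratch gadget construction delicate, because the set $\Theta$ of realizable ratios is shared globally by all agents, so gadgets interact through the ratios they make available and modularity is hard to guarantee. The paper avoids all of this by choosing a much closer source problem: core non-emptiness in \emph{anonymous} hedonic games, already known to be NP-hard (Ballester, 2004). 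Anonymous games embed almost directly into diversity games: each original player $a_i$ with size preferences $s(i,1) \succ s(i,2) \succ \cdots$ becomes a blue agent preferring ratios $\frac{1}{s(i,1)+1} \succ \frac{1}{s(i,2)+1} \succ \cdots \succ 0$, and for each size $t$ one adds a pool of $\lceil n/t\rceil + 1$ red agents with preferences $\frac{1}{t+1} \succ 1$, which act as ``size markers'': a coalition of $t$ blue agents plus one such red agent has ratio $\frac{1}{t+1}$, so the red ratio encodes the original coalition size. The correctness proof then amounts to checking that blocking coalitions translate back and forth (including handling coalitions with more than one red agent, which split into equivalent single-red-agent coalitions). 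Without committing to a source problem and exhibiting a construction with both directions verified, your proposal does not establish NP-hardness; note also that the reduction you sketch would in effect re-prove Ballester's theorem inside the diversity-game setting, duplicating work that the paper simply cites.
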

\begin{proof}
Checking core stability can be done in polynomial time and hence our problem is in NP. Indeed, for each size of a coalition $s \in [n]$ and each number $r \in \{0\} \cup [|R|]$, we can check in polynomial time if there is a blocking coalition of size $s$ including $r$ red players: We consider the set $S_{s,r}$ of all players who strictly
prefer the ratio $\frac{r}{s}$ to their ratio under $\pi$ and  verify whether $S_{s,r}$ admits a subset of size at least $s$ that contains $r$ red players; if this is the case, there is a coalition of size exactly $s$ that strongly blocks $\pi$. If no such coalition exists, $\pi$ is core stable. 

We reduce from the problem of deciding whether the core of an anonymous hedonic game is non-empty.
 A hedonic game is called {\em anonymous} if all players only care about the sizes of their coalition. Formally, an {\em anonymous hedonic game} is a pair $(N,(>_i)_{i \in N})$, 
where $N$ is a set of {\em agents}  and for each agent $i\in N$ it holds that
$>_i$ is a linear order over the set~$[|N|]$ of coalition sizes.  
 Ballester~\cite{Ballester2004} has shown that deciding whether the core of an anonymous hedonic game is non-empty is NP-hard.
 
 Let $n$~be the number of players and let the preferences of the players be encoded as follows
 $$a_i: s(i,1) \succ s(i,2)  \succ \dots \succ s(i,n),$$
 where $s(i,\ell)$ denotes $\ell$th most preferred the coalition size of agent~$i$.\footnote{For
 technical reasons, we assume that always all sizes between $1$ and $n$ are listed, including irrational ones.}
 We construct our equivalent hedonic diversity game instance as follows.
 First, for each original player we introduce one blue agent
 $$b_i: \frac{1}{s(i,1) + 1} \succ \frac{1}{s(i,2) + 1} \succ \dots \succ \frac{1}{s(i,N) + 1} \succ 0.$$
 Naturally, the preferences of the original player~$a_i$ with respect to coalitions of size~$s$
 shall be represented by the preference of player~$b_i$ with respect to coalitions with~$s$ blue players and exactly one red player.
 For each possible coalition size $1 \le t \le n$ from the original game, we introduce $\lceil n/t \rceil + 1$ red players~$r_t^\ell$ that would accept
 to be in a coalition with~$t$ blue player and otherwise prefer to be alone:
 $$r_t^\ell: \frac{1}{t + 1} \succ 1, \forall 1 \le \ell \le \lceil n/t \rceil + 1$$
 
 If we could guarantee that every mixed coalition has at most one red player, then, the correctness proof would be almost immediate.
 It might, however, happen that there are coalitions with more then one red player.
 In the following correctness proof, we will see that these coalitions are unproblematic.
 
 Formally, we show that the core of the original anonymous hedonic game is non-empty if and only if the core of the newly constructed
hedonic diversity game is non-empty.
 
 For the first direction, assume that the core of the anonymous game was non-empty.
 That means there is a partition~$\pi=\{P_1$, $P_2$, $\dots$, $P_k\}$ of the original players into coalitions so that no blocking coalition exists.
 We build a core-stable partition~$\pi'=\{P'_1$, $P'_2$, $\dots$, $P'_k$, $P_R\}$ of our newly constructed hedonic diversity game instance from~$\pi$ as follows:
 $P'_j=\{b_i \mid a_i \in P_j\} \cup \{r_t^\ell\}$, where $t=|P_j|$ and $\ell=|\{P_x \mid x \le j \wedge |P_x|=t\}|$
 and $P_R=R\setminus(\bigcup P_j)$ simply contains all remaining red players. 
 
 Assume towards a contradiction that there was a blocking coalition~$C^*$ of players that blocks~$\pi'$.
 We first argue that we can assume w.l.g.\ that~$C^*$ contains at most one red player.
 Of course, $C^*$ cannot contain only red players because every red player in $\pi'$ is either in $P_R$ and, thus, already only with other red players
 or it is by definition of~$\pi'$ in one of its most preferred coalition.
 Now, assume $C^*$~contains $y$~red players.
 By the definition of the preferences of the red players, $|C^*|$ must be the red player's most preferred coalition ratio.
 That is, the number of blue players in~$C^*$ must be divisible by~$y$.
 This immediately implies that there is a smaller blocking coalition~$C^{**}$ which consists of one aribitrary red player from~$C^*$
 and $(|C^*|-y)/y$ arbitrary blue players from~$C^*$.
 Finally, $C^*$ (now containing exactly one red player) can only be blocking because every blue player prefers 
 a coalition with one red and $|C^*|-1$ blue players towards its current coalition in~$\pi'$.
 The way we constructed the preferences of of blue players ensures that this implies that the coalition $C^A=\{a_i \mid b_i \in C^*\}$
 must be a blocking coalition for $\pi$---a contradiction.
 
 For the second direction, assume that the core of the diversity game was non-empty.
 That means there is a partition~
 \[
 \pi'=\{P'_1, P'_2, \dots, P'_k, P'_R\} 
 \]
 of the blue and red players into coalitions so that no blocking coalition exists.
 
 We assume w.l.g.\ that all coalitions~$P'_j$ are mixed and~$P'_R$ contains only red players.
 First, there could exist multiple purely red coalitions but merging them into one gives a partition
 that must be core stable if and only if the original partition was.
 Second, we can easily observe that $P'_R$ is non-empty, because
 no player prefers mixed coalitions with a majority of red players
 and there are much more red than blue players.
 Third, a purely blue coalition cannot exist.
 Every blue player prefers coalitions with as many red as blue players to those coalitions 
 and there are always enough red players available that would be willing to pair up.
 (Since one of the $s(i,\ell)$ equals $1$, ratio $\frac{1}{2}$ is preferred to $0$.)
 Finally, for each integer $1 \le t \le n$ coalition $P_R$ must contain at least one
 red player~$r_t^\ell$ which would prefer to be in a coalition with $t$~blue players.
 Let us refer to this number as~$\ell(t)$.
 
 Furthermore, we can assume that w.l.g.\ each $P'_j$ contains at exactly one red player.
 Every red player that is contained in some~$P'_j$ must be in a coalition
 that has the player's most preferred ratio of red player, otherwise the player would be in~$P_R$.
 In particular, this means that if there are $y$~red players in~$P'_j$, then all these players
 must be of the same type and prefer coalitions with a ratio of $\frac{y}{y(t+1)}$ red players
 for some integer~$t$.
 Clearly, splitting up $P'_j$~ into~$y$ coalitions with exactly one red player and exactly $t$~blue players
 will give us another core stable partition, because these new coaltions have exactly the
 same ratio as~$P'_j$ has.
 
 With all these assumptions, we are now ready to define a partition~$\pi=\{P_1,\dots,P_k\}$ for the
 players of the original anonymous hedonic game instance via $P_j=\{a_i \mid b_i \in P'_j\}$.
 Finally, assume towards a contradiction that there would be a coalition~$C^A$ that blocks $\pi$.
 Then, using our assumptions on $\pi'$ and the definition of the players preferences,
 the coalition~$C^*=\{b_i \mid a_i \in C^A\} \cup r_t^\ell$, where $t=|C^A|$ and $\ell=\ell(t)$,
 would be a blocking coalition for $\pi$:
 Player~$r_t^\ell$ comes from $P_R$ and, hence, clearly prefers~$C^*$ and the blue players,
 by definition of their preferences, prefer $C^*$~if and only if the corresponding original
 players prefer a size~$t$ coalition towards their current ones---a contradiction that there
 is a blocking coalition for~$\pi$.
\end{proof}

In Example~\ref{ex:emptycore} there are at least two agents of each type. 
In contrast, if one of the types is represented by a single agent,
then the core is guaranteed to be non-empty.

\begin{proposition}\label{prop:core}
Let $G = (R,B,(\succ_i)_{i \in R \cup B})$ be a diversity game with $|R|=1$ or $|B|=1$. 
Then the core of $G$ is non-empty, and a partition in the core can be constructed in polynomial time.
\end{proposition}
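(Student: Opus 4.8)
The plan is to treat the case $|B|=1$ in full; the case $|R|=1$ is symmetric (interchange the two colours, replacing every red ratio $\theta$ by $1-\theta$ throughout), so I would prove the former and remark on the reduction. Write $B=\{b\}$ and $R=\{r_1,\dots,r_n\}$. The key structural observation is that, since there is a single blue agent, every partition $\pi$ has at most one mixed coalition, namely the one containing $b$. Hence every red agent is in exactly one of two situations: she shares $b$'s coalition, or she sits in an all-red coalition. An all-red coalition has red ratio $1$, the singleton $\{b\}$ has red ratio $0$, and a mixed coalition with $k\geq 1$ red agents has red ratio $\frac{k}{k+1}$. Thus the only ratios a red agent can attain are $1$ and the values $\frac{k}{k+1}$, the only ratios $b$ can attain are $0$ and the values $\frac{k}{k+1}$, and each of these is determined purely by the size of the mixed coalition.

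For the construction, for each $k\in[n]$ I would call a mixed coalition of size $k+1$ \emph{feasible} if at least $k$ red agents strictly prefer $\frac{k}{k+1}$ to $1$; writing $W_k=\{i\in R : \frac{k}{k+1}\succ_i 1\}$, feasibility of size $k$ means $|W_k|\geq k$. The ratio $0$ (that is, $b$ alone) is always feasible. Let $\theta^\ast$ be the ratio that $b$ most prefers among all feasible options; this set is non-empty because $0$ is feasible. If $\theta^\ast=\frac{k}{k+1}$, form the mixed coalition $\{b\}\cup T$ for an arbitrary $T\subseteq W_k$ with $|T|=k$ and place every remaining red agent in an all-red coalition; if $\theta^\ast=0$, make $b$ a singleton and put all red agents in all-red coalitions. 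The resulting $\pi$ is individually rational: each red agent in the mixed coalition prefers $\frac{k}{k+1}$ to her singleton ratio $1$ by the choice $T\subseteq W_k$, and $b$ weakly prefers $\theta^\ast$ to $0$ since $\theta^\ast$ is $b$-best among feasible ratios while $0$ is feasible.

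The heart of the argument, and the only step I expect to be subtle, is ruling out blocking coalitions. Any \emph{mixed} blocking coalition must contain $b$ and have red ratio $\frac{k'}{k'+1}$ with $\frac{k'}{k'+1}\succ_b \theta^\ast$; by the choice of $\theta^\ast$ as $b$'s favourite feasible ratio, every such ratio is infeasible, i.e.\ $|W_{k'}|<k'$. I would then show that every red agent willing to join this block lies in $W_{k'}$: an agent currently at ratio $1$ must strictly prefer $\frac{k'}{k'+1}$ to $1$, so she is in $W_{k'}$; an agent currently in the mixed coalition strictly prefers $\frac{k'}{k'+1}$ to $\frac{k}{k+1}$, and since she was chosen from $W_k$ she also prefers $\frac{k}{k+1}$ to $1$, so by transitivity of $\succ_i$ she prefers $\frac{k'}{k'+1}$ to $1$ and again lies in $W_{k'}$. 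Hence at most $|W_{k'}|<k'$ red agents would join, one short of the $k'$ required, so no mixed block exists. An \emph{all-red} block (red ratio $1$) is impossible as well: the red agents outside the mixed coalition already have ratio $1$ and cannot strictly improve, while those inside strictly prefer $\frac{k}{k+1}$ to $1$ and so would not join. Therefore $\pi$ is core stable, and since computing each $W_k$ and selecting $\theta^\ast$ takes polynomial time, this yields the claimed efficient construction.
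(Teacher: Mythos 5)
Your proposal is correct and takes essentially the same route as the paper: the paper (working with $|R|=1$) forms the lone agent's favourite coalition among all individually rational coalitions containing her and leaves everyone else homogeneous, and your ``feasible'' ratios (defined via the sets $W_k$) are exactly the ratios of such individually rational coalitions. Your counting argument that a mixed block would force $k'\leq |W_{k'}|<k'$ is just an unpacked version of the paper's one-line observation that any blocking coalition containing the lone agent would itself be individually rational (via the same transitivity step you use), contradicting the choice of $\theta^\ast$.
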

\begin{proof}
Without loss of generality, assume that $R=\{r\}$. Let $\Theta_r$ be the 
set of ratios of red agents in individually rational coalitions to which $r$ belongs, i.e., 
$$
\Theta_r=\{ \theta_R(S) \mid r \in S \subseteq R \cup B~\land~\mbox{$S$ is individually rational} \}.
$$ 
Take $r$'s favorite ratio $\theta^*$ among the fractions in $\Theta_r$. Construct a coalition $S^*$ 
with $\theta_R(S^*)=\theta^*$ and put the rest of blue agents into singletons. Clearly, the resulting 
partition $\pi$ is not blocked by any coalition of blue agents. Also, if there exists a coalition $S$ 
that contains the red agent $r$ and blocks $\pi$, then $\theta_R(S) \succ_r \theta^*$ and $S$ is 
individually rational, contradicting the choice of $\theta^*$. Thus, the resulting partition is core 
stable.
\end{proof}

We also note that in the game in Example~\ref{ex:emptycore} agents' preferences
belong to one of the three categories. The next proposition shows that if all agents have the 
same preferences, then there is a core stable outcome. We conjecture that with only two types of 
single-peaked preferences, the core is non-empty as well.
\begin{proposition}
Let $G = (R,B,(\succ_i)_{i \in R \cup B})$ be a diversity game such that each 
agent has the same preference over $\Theta\setminus\{0, 1\}$. Then $G$ has a non-empty core
and a partition in the core can be constructed in polynomial time.
\end{proposition}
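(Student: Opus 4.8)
The plan is to exploit the fact that, since all agents share the common order over the interior ratios $\Theta\setminus\{0,1\}$, the only freedom left to an individual agent $i$ is where she inserts her \emph{outside option} into this common order: for a red agent this is the ratio $1$ (a homogeneous red coalition, including a singleton), and for a blue agent it is the ratio $0$. (Observe that a red agent can never sit in a coalition of red ratio $0$, nor a blue agent in one of red ratio $1$, so the position of the ``wrong'' extreme is irrelevant.) Writing the interior ratios in decreasing common-preference order as $\theta_1\succ\theta_2\succ\cdots\succ\theta_m$, I would first show that for each red agent $i$ the set of interior ratios she strictly prefers to $1$ is a \emph{prefix} $\{\theta_1,\dots,\theta_{k_i}\}$, and symmetrically for each blue agent. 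Thus every agent is captured by a single \emph{acceptability threshold}, and willingness to join a coalition of ratio $\theta_j$ is monotone: if $i$ accepts $\theta_j$ she accepts every more-preferred ratio.

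Next I would define a greedy algorithm that processes the interior ratios from best to worst while maintaining pools of unassigned red and blue agents. When processing $\theta_j=a/b$ (in lowest terms), it repeatedly forms a coalition of $a$ red and $b-a$ blue agents, drawn from unassigned agents whose threshold permits $\theta_j$, until one pool is exhausted (fewer than $a$ willing reds or fewer than $b-a$ willing blues remain). After all interior ratios are processed, the remaining red agents are gathered into a homogeneous red coalition and the remaining blue agents into a homogeneous blue coalition. Since each agent only ever joins a coalition she strictly prefers to her outside option, the resulting partition $\pi$ is individually rational, and the whole procedure runs in polynomial time because $|\Theta|=O(|N|^2)$.

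The core of the argument, and the step I expect to require the most care, is showing that $\pi$ has no blocking coalition. I would first dispose of monochromatic blockers: a red agent placed in a mixed coalition strictly prefers its interior ratio to $1$, and a leftover red already sits at ratio $1$, so no red strictly prefers an all-red coalition to her own; symmetrically for blue. Hence any blocking coalition $S$ has an interior ratio $\theta_j$. For this mixed case I would use a maximality (exchange) argument. Each agent of $S$ strictly prefers $\theta_j$ to her assigned ratio, which is therefore strictly worse than $\theta_j$; this forces (i) that she is \emph{willing} to accept $\theta_j$, since $\theta_j\succ_i\text{(current)}\succeq_i\text{(outside option)}$, and (ii) that she was still unassigned when the algorithm processed $\theta_j$ and was \emph{not} assigned during that step, as otherwise her ratio would be exactly $\theta_j$. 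Writing $|S|=cb$ with $c\ge 1$, the coalition $S$ then supplies at least $ca\ge a$ willing unassigned reds and at least $c(b-a)\ge b-a$ willing unassigned blues that survived the $\theta_j$-step, contradicting the stopping rule of the greedy procedure, which halts only once fewer than $a$ willing reds or fewer than $b-a$ willing blues remain. This contradiction establishes that $\pi$ is core stable.
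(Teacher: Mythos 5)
Your proof is correct, but it takes a genuinely different route from the paper's. The paper's proof is structural: it shows that when all agents share the same ranking of interior ratios, the game satisfies the \emph{top coalition property} of Banerjee, Konishi and S\"{o}nmez --- for every subset $S$ of agents there is a coalition $T \subseteq S$ (either a singleton, or a mixed coalition realizing the commonly most-preferred ratio achievable within $S$) that all its members weakly prefer to anything else inside $S$ --- and then invokes their general theorem that iteratively extracting top coalitions yields a core stable partition in polynomial time. You instead prove core stability from scratch: your observation that each agent is fully described by a \emph{prefix threshold} (the interior ratios she prefers to her outside option form a prefix of the common order, by transitivity) is sound, your level-by-level greedy over ratios $\theta_1 \succ \cdots \succ \theta_m$ is essentially the paper's iterated-top-coalition algorithm in disguise, and your counting/exchange argument against a mixed blocker is valid --- every member of a blocking coalition of ratio $a/b$ must have been willing and unassigned when that ratio was processed, and a coalition of that ratio supplies at least $a$ such reds and $b-a$ such blues, contradicting the stopping rule. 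What the paper's approach buys is brevity and a stronger structural conclusion (the top coalition property itself, which plugs into known hedonic-games machinery); what your approach buys is a self-contained, elementary argument that does not rely on the external theorem, at the cost of having to handle monochromatic and mixed blockers explicitly.
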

\begin{proof}
We claim that our game satisfies the {\em top coalition property}: for each $S \subseteq N$, there 
exists a {\em top coalition} $T \subseteq S$ such that every agent in $T$ 
weakly prefers $T$ to every other coalition contained in $S$. 
Indeed, take any $S \subseteq N$. If there is an agent 
$i \in S$ who weakly prefers his own coalition $\{i\}$ to every mixed coalition contained in $S$, then 
taking $T=\{i\}$ certifies the existence of a top coalition. Thus, suppose otherwise, i.e., every agent 
in $S$ strictly prefers some mixed coalition to his own singleton. As every agent has the same 
ranking over the fractions of red agents in mixed coalitions, there is a mixed coalition 
$T \subseteq S$ such that $\theta_R(T)$ is the favorite red ratio for every agent in $T$ 
among the ratios of red agents in the subsets of $S$, which means that $T$ is a top coalition.
This argument also gives us an efficient algorithm for identifying a top coalition:
it suffices to check whether some singleton forms a top coalition, and, if not, 
find the agents' most preferred ratio that can be implemented in $\Theta\setminus\{0, 1\}$
and construct a coalition with this red ratio.

We can now construct a stable partition by repeatedly identifying a top coalition
with respect to the current set of agents, adding it to the partition, and removing the agents
in that coalition from the current set of agents; it is clear that this procedure  
can be executed in polynomial time, and \citet{Banerjee2001} argue that it produces a core stable outcome.
\end{proof}


\section{Nash Stability and Individual Stability}
We have seen that core stability may be impossible to achieve. It is 
therefore natural to ask whether every diversity game has an outcome 
that is immune to individual deviations. It is easy to see that the answer is `no'
if we consider NS-deviations, even if we restrict ourselves 
to single-peaked preferences: for instance, the game where 
there is one red agent who prefers to be 
alone and one blue agent who prefers to be in a mixed coalition
has no Nash stable outcomes. 

In contrast, each diversity game with single-peaked preferences admits
an individually stable outcome. Moreover, such an outcome can be computed
in polynomial time. In the remainder of this section, we will present
an algorithm that achieves this, and prove that it is correct.
The algorithm will be divided into three parts:

\begin{enumerate}
\item For agents with peaks greater than half, make mixed coalitions with red majority. 
      For agents with peaks smaller than half, make mixed coalitions with blue majority.
\item Make pairs from the remaining red agents and blue agents who are not in the mixed coalitions.
\item Put all the remaining agents into singletons.
\end{enumerate}
We will first show that one can construct a sequence of mixed coalitions with red majority 
that are immune to IS-deviations. 

\subsection{Create mixed coalitions with red majority}\label{sec:mixed}
In making mixed coalitions, we will employ a technique that is similar 
to the algorithm for anonymous games proposed by \citet{Bogomolnaia2002}. 
Intuitively, imagine that red agents and blue agents with peaks at least $1/2$
form two lines, each of which is ordered from the highest peak to the lowest peak. 
The agents enter a room in this order, 
with a single blue agent entering first and red agents successively joining it 
as long as the fraction of red agents does not exceed the minimum peak of the agents already in the room. 
Once the fraction of the red agents reaches the minimum peak, 
a red agent who enters the room may deviate to a coalition that has been formed before. 
We alternate these two procedures as long as there is a red agent 
who can be added without exceeding the minimum peak. If no red agent can enter a room, 
then agents start entering another room and create a new mixed coalition in the same way. 
The algorithm terminates if either all red agents or all blue agents with their peaks 
at least half join a mixed coalition. 
Figure~\ref{fig:mixed-coalitions} illustrates this coalition formation process. 
We formalize this idea in Algorithm~\ref{alg:sub:IS}. 
We will create mixed coalitions containing exactly one blue agent, 
so we define the {\em virtual peak} $q_i$ to be the favorite ratio of agent $i$ 
among the ratios of red agents in coalitions containing exactly one blue agent.

\begin{figure*}[tb]
	\centering
	\begin{tikzpicture}[scale=0.8, transform shape, every node/.style={minimum size=5mm, inner sep=1pt}]	
	
	\draw (-0.6,0.4) rectangle (3,-0.4);
	\draw[-, >=latex,dotted,thick] (0.4,0) -- (1.2,0);
	\node[draw, circle,fill=gray!50] at (0,0) {};
	\node[draw, circle,fill=gray!50] at (1.6,0) {};
	\node[draw, circle,fill=gray!50] at (2.4,0) {};
	\node at (1.2,-0.8) {\Large $S_0$};
	
	\begin{scope}[shift={(-4,0)}]
	\draw (-0.6,0.4) rectangle (3,-0.4);
	\node[draw, circle,fill=gray!50] at (0,0) {};
	\draw[-, >=latex,dotted,thick] (0.4,0) -- (1.2,0);
	\node[draw, circle,fill=gray!50] at (1.6,0) {};
	\node[draw, circle] at (2.4,0) {$b_1$};
	\node at (1.2,-0.8) {\Large $S_1$};
	\end{scope}
	
	\begin{scope}[shift={(-10,0)}]
	\draw[-, >=latex,dotted,thick] (3.4,0) -- (5,0);
	\draw[-, >=latex,dotted,thick] (0.4,0) -- (1.2,0);
	\draw (-0.6,0.4) rectangle (3,-0.4);
	\node[draw, circle,fill=gray!50] at (0,0) {};
	\node[draw, circle,fill=gray!50](2) at (1.6,0) {};
	\node[draw, circle] at (2.4,0) {$b_{t}$};
	\node at (1.2,-0.8) {\Large $S_{t}$};
	\end{scope}
	
	\begin{scope}[shift={(-14.5,-0.5)}]
	\node[draw, circle] at (0,0) {};
	\node[draw, circle] at (1,0) {};
	\node[draw, circle] at (2,0) {};
	\node[draw, circle] at (3,0) {};
	\draw[-, >=latex,dotted,thick] (-0.5,0) -- (-1.2,0);
	\end{scope}
	
	\begin{scope}[shift={(-14.5,0.5)}]
	\node[draw, circle,fill=gray!50] at (0,0) {};
	\node[draw, circle,fill=gray!50] at (1,0) {};
	\node[draw, circle,fill=gray!50] at (2,0) {};
	\node[draw, circle,fill=gray!50](1) at (3,0) {$r_i$};
	\draw[-, >=latex,dotted,thick] (-0.5,0) -- (-1.2,0);
	\end{scope}
	
	\draw [->, thick] (1.north) to [out=70,in=90] (-10,0.3);
	
	\draw [->, thick] (2.north) to [out=70,in=110] (-6,0.3);
	\end{tikzpicture}
	\vspace{-8pt}
	\caption{Illustration of how HALF$(R,B,(\succ_i)_{i \in R\cup B})$ creates 
                coalitions with red majority. Gray circles correspond to red agents, 
                while white circles correspond to blue agents.
		\label{fig:mixed-coalitions}
	}
\end{figure*}
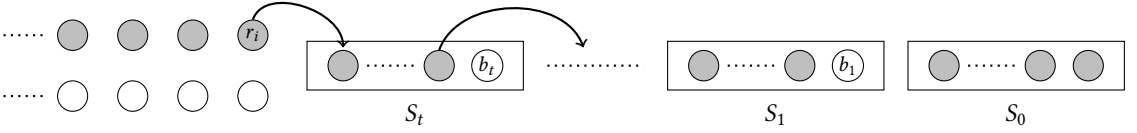

\begin{algorithm}
\SetKwInOut{Input}{input} 
\SetKwInOut{Output}{output}
\SetKw{And}{and}
\SetKw{None}{None}
\caption{HALF$(R,B,(\succ_i)_{i \in R\cup B})$}\label{alg:sub:IS}
\Input{A single-peaked diversity game $(R,B,(\succ_i)_{i \in R\cup B})$} 
\Output{$\pi$}
	sort red and blue agents so that $q_{r_1} \geq q_{r_2} \geq \ldots q_{r_{x}}$ and 
                                         $q_{b_1} \geq q_{b_2} \geq \ldots  \geq q_{b_y}$\;
	initialize $i\leftarrow 1$, $k \leftarrow 1$ and $S_0 \leftarrow \emptyset$\;
	initialize $R' \leftarrow \{\, r \in R \mid p_r \geq \frac{1}{2}\,\}$ and 
                   $B' \leftarrow \{\, b \in B \mid p_b \geq \frac{1}{2}\,\}$\;
	\While{$R' \neq \emptyset$ and $B' \neq \emptyset$}{
	set $S_k \leftarrow \{b_k\}$\;
	\While{$\theta_R(S_k \cup \{r_i\}) \leq  \min \{q_{r_{i}},q_{b_k}\}$, or 
           there exist a agent $r \in S_k \cap R$ and $t<k$ such that 
               $r$ has an IS-deviation from $S_k$ to $S_{t}$ \label{step:if}}{
	{\tt //  add red agents to $S_k$ as long as the ratio of red agents 
                  does not exceed the minimum virtual peak}\;
	\While{$\theta_R(S_k \cup \{r_i\}) \leq \min \{q_{r_{i}},q_{b_k}\}$\label{step:red:while}}{
	set $S_k \leftarrow S_k \cup \{r_i\}$\label{step:join}\;
	set $R' \leftarrow R' \setminus \{r_i\}$ and $i \leftarrow i+1$\;
	}	
	{\tt //  let red agents in $S_k$ deviate to smaller-indexed coalitions}\;
	\If{there exists an agent $r \in S_k \cap R$ and $t<k$ such that 
              $r$ has an IS-deviation from $S_k$ to $S_{t}$ \label{step:leave:if}}{
	choose $r$ and $S_t$ so that $\theta_R(S_{t} \cup \{r\})$ is 
            $r$'s most preferred ratio among the coalitions $S_{t}$ satisfying the above\;
	set $S_{t} \leftarrow S_{t} \cup \{r\}$, $S_k \leftarrow S_{t} \setminus \{r\}$\label{step:leave}\;
	}
	}
	set $B' \leftarrow B' \setminus \{b_k\}$, and $k \leftarrow k+1$\;
	}
	{\tt //  let the remaining agents deviate to mixed coalitions as long as 
                  they prefer the ratio of the deviating coalition to half}\;
	\While{there is an agent $r \in R'$ and a coalition $S_t$ such that $t\geq 0$, 
                  $\theta_R(S_{t} \cup \{r\}) \succ_r \frac{1}{2}$, 
         and all agents in $S_{t}$ accept a deviation of $r$ to $S_{t}$ \label{step:outside:if}}{
	choose $r$ and $S_{t}$ so that $\theta_R(S_{t} \cup \{r\})$ is 
            $r$'s most preferred ratio among the coalitions $S_{t}$ satisfying the above\;
	$S_{t} \leftarrow S_{t} \cup \{r\}$ and $R' \leftarrow R'\setminus \{r\}$ \label{step:outside}\;
	}
	\If{$R'=\emptyset$ and $S_k$ consists of a single blue agent \label{step:single:woman}}{
	return $\pi=\{\,\{r\} \mid r \in S_0 \,\} \cup \{S_1,S_2,\ldots,S_{k-1}\}$\label{step:blue}\;
	}
	\Else{
	return $\pi=\{\,\{r\} \mid r \in S_0 \,\} \cup \{S_1,S_2,\ldots,S_k\}$\;
	}
\end{algorithm}

In what follows, we assume that $S_0,S_1,\ldots,S_k$ are the final coalitions that have been obtained
at the termination of the algorithm, and that $\pi$ is the output returned by the algorithm. For each 
$t>0$ and $i \in S_{t}$,
\begin{itemize}
\item $i$ is called a {\em default agent} of $S_{t}$ if $i$ is a blue agent or $i$ is a red agent 
      who has joined $S_{t}$ at Step \ref{step:join}; 
\item $i$ is called a {\em new agent} of $S_{t}$ if $i$ is a red agent 
      who has joined $S_{t}$ in Step \ref{step:leave} or Step \ref{step:outside}.
\end{itemize}
We denote by $D_{t}$ the set of default agents of $S_{t}$, 
and we denote by $N_{t}$ the set of new agents of $S_{t}$. 

For each $S_t$ with $t>0$, each agent in $S_t$ is either a default agent of a new agent. Notice that $S_0$ starts with the empty set and plays the role of the {\em last resort} option for red agents, i.e., red agents can always deviate to $S_0$ if they strictly prefer staying alone to the mixed coalition they have joined. 

\noindent {\bf Deviations of blue agents\ } 
We will first show that the algorithm constructs a sequence of coalitions such that no blue agent in the 
coalitions has an IS-deviation to other coalitions. We establish this by proving a sequence of claims. 
First, it is immediate that the red ratio of each coalition $S_{t}$ is at least half, except for the 
last coalition, which may contain a single blue agent. 

\begin{lemma}\label{lem:half}
For each $S \in \pi$, the red ratio of $S$ is at least half, i.e., $\theta_R(S) \geq \frac{1}{2}$.
\end{lemma}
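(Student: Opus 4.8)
The plan is to reduce the claim to a simple structural fact: every coalition in $\pi$ is either a singleton consisting of a red agent (one of the members of $S_0$) or a coalition containing exactly one blue agent together with at least one red agent. A red singleton has red ratio $1$, and a coalition with one blue agent and $m \geq 1$ red agents has red ratio $\frac{m}{m+1} \geq \frac12$, so once I rule out coalitions consisting of a lone blue agent the lemma is immediate. Thus the whole argument is about showing that every mixed coalition surviving in $\pi$ contains at least one red agent.

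First I would record the auxiliary observation that every agent $i$ with $p_i \geq \frac12$ has virtual peak $q_i \geq \frac12$. The ratios attainable with exactly one blue agent are $0, \frac12, \frac23, \dots$, and the only one below $\frac12$ is $0$; single-peakedness with $0 < \frac12 \leq p_i$ gives $\frac12 \succ_i 0$, so $i$'s favorite such ratio is at least $\frac12$. Since the outer loop only ever handles agents of $R'$ and $B'$, all of which have peaks at least $\frac12$, we always have $\min\{q_{r_i}, q_{b_k}\} \geq \frac12$. Consequently, the instant the loop body sets $S_k \gets \{b_k\}$, the guard $R' \neq \emptyset$ supplies an unprocessed red agent $r_i$, and since $\theta_R(\{b_k, r_i\}) = \frac12 \leq \min\{q_{r_i}, q_{b_k}\}$ the inner loop on Step~\ref{step:red:while} fires and $r_i$ joins $S_k$. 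The only operation that ever removes a red agent from a coalition is the deviation at Step~\ref{step:leave}, which moves a red agent from the current coalition $S_k$ to an earlier coalition $S_t$ with $t<k$ (Step~\ref{step:outside} only adds red agents drawn from the pool). Hence a coalition can be emptied of red agents only while it is the one currently being processed, and after that it can only gain red agents.

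The main point to nail down is therefore the corner case in which the coalition currently being processed is drained back to $\{b_k\}$, together with the bookkeeping of the returned partition. The inner loop on Step~\ref{step:if} halts only when no red agent can be added and no red agent of the coalition can deviate to an earlier one; if the coalition is $\{b_k\}$ the deviation clause is vacuous, and by the auxiliary observation the ``cannot add'' clause can hold only because there is no $r_i$ left, i.e.\ $R'=\emptyset$. As $R'$ is monotonically non-increasing throughout the algorithm, $R'=\emptyset$ then persists until termination. So any red-agent-free coalition must be the last one created and makes the guard $R'=\emptyset$ of the final conditional (Step~\ref{step:single:woman}) true, which is exactly when that coalition is dropped from $\pi$. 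Every coalition remaining in $\pi$ thus contains at least one red agent or is a red singleton, giving $\theta_R(S)\geq\frac12$. I expect this corner case---showing that a lone-blue coalition can arise only when $R'$ is already exhausted and is precisely the one excluded by the output rule---to be the only delicate step; the ratio inequalities themselves are trivial.
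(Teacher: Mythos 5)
Your proof is correct and takes essentially the same route as the paper's: every coalition $S_t$ with $t>0$ contains exactly one blue agent, a coalition with no red agents can only arise as the last coalition formed once $R'$ is exhausted, and that coalition is precisely the one discarded at Step~\ref{step:blue}. If anything, your write-up is slightly more complete than the paper's, since you explicitly justify (via the observation that any agent with peak at least $\frac{1}{2}$ has virtual peak at least $\frac{1}{2}$, so a mixed pair is always admissible while $R'\neq\emptyset$) why a lone blue coalition forces $R'=\emptyset$ --- a step the paper's proof leaves implicit.
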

\begin{proof}
Take any $S \in \pi$. The claim is immediate when $S\subseteq S_0$. 
Suppose that $S=S_{t}$ for some $t>0$. Then it is clear that $S_{t}$ contains exactly one blue agent. 
If $S_{t}$ contains no red agent, then this would mean that all red agents with their peaks 
at least half joined $S_{t}$, but deviated to smaller indexed-coalitions, 
meaning that $S_{t}$ is the last coalition that has been formed, i.e., $t=k$; 
but this contradicts the construction in Step \ref{step:blue}. 
Thus, $S_{t}$ contains exactly one blue agent and at least one red agent, 
which implies that $\theta_R(S_{t}) \geq \frac{1}{2}$.
\end{proof}

This leads to the following lemma.

\begin{lemma}\label{lem:singleton}
For each $r \in S_0$, $r$ is a red agent and $r$ strictly prefers $\{r\}$ to a mixed pair.
\end{lemma}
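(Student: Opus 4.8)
The plan is to prove the two assertions of the lemma separately: first that $S_0$ contains only red agents, and then that every agent in $S_0$ strictly prefers the singleton to a mixed pair. For the first assertion, I would trace how $S_0$ grows during the execution of Algorithm~\ref{alg:sub:IS}. It is initialized to $\emptyset$, and the only lines that can enlarge a coalition $S_t$ with $t=0$ are Step~\ref{step:leave} and Step~\ref{step:outside}, each of which inserts a \emph{red} agent that is deviating. Since blue agents appear only as the seed $\{b_k\}$ of a fresh coalition $S_k$ with $k\ge 1$ and are never moved, no blue agent is ever placed in $S_0$; hence $S_0\subseteq R$.

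For the second assertion, I would observe that a mixed pair has red ratio exactly $\frac12$, while the singleton $\{r\}$ of a red agent has red ratio $1$, so it suffices to establish $1\succ_r\frac12$ for every $r\in S_0$. I would fix such an $r$ and examine the moment it is inserted into $S_0$, splitting into the two insertion lines. If $r$ enters through Step~\ref{step:outside}, the loop guard there requires $\theta_R(S_t\cup\{r\})\succ_r\frac12$ for the chosen target $S_t=S_0$; since $S_0\subseteq R$, the coalition $S_0\cup\{r\}$ is all-red with ratio $1$, so this gives $1\succ_r\frac12$ directly.

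The remaining case, where $r$ enters $S_0$ through the IS-deviation of Step~\ref{step:leave}, is the main obstacle, because the NS-part of the deviation only yields a comparison with the current coalition rather than with $\frac12$: it tells us $r$ prefers $S_0\cup\{r\}$ (ratio $1$) to $S_k$, i.e.\ $1\succ_r\theta_R(S_k)$. I would close the gap using single-peakedness together with two facts about $S_k$. Since $S_k$ contains the single blue agent $b_k$ and at least one red agent (namely $r\in S_k\cap R$), we have $\frac12\le\theta_R(S_k)<1$. From $1\succ_r\theta_R(S_k)$ and $\theta_R(S_k)<1$, single-peakedness forces $\theta_R(S_k)<p_r$: otherwise $1>\theta_R(S_k)\ge p_r$ would give $\theta_R(S_k)\succ_r 1$, a contradiction. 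If $\theta_R(S_k)=\frac12$, then $1\succ_r\theta_R(S_k)$ is already the desired $1\succ_r\frac12$. If instead $\frac12<\theta_R(S_k)<p_r$, then single-peakedness on the left of the peak (with $\frac12<\theta_R(S_k)\le p_r$) gives $\theta_R(S_k)\succ_r\frac12$, and transitivity with $1\succ_r\theta_R(S_k)$ yields $1\succ_r\frac12$. In every case $r$ strictly prefers $\{r\}$ to a mixed pair, which completes the proof.
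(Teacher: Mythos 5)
Your proof is correct and takes essentially the same approach as the paper's: the paper likewise notes that $S_0$ contains only red agents by construction, treats insertion at Step~\ref{step:outside} as immediate, and for insertion at Step~\ref{step:leave} combines $1\succ_r\theta_R(S_k)$ with $\theta_R(S_k)\ge\frac{1}{2}$ and single-peakedness to get $1\succ_r\frac{1}{2}$. Your explicit ruling-out of $\theta_R(S_k)\ge p_r$ (via the contradiction that single-peakedness would then give $\theta_R(S_k)\succ_r 1$) carefully justifies the step the paper states in one line, namely that $\theta_R(S_k)\succ_r\frac{1}{2}$ or $\theta_R(S_k)=\frac{1}{2}$.
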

\begin{proof}
By construction every agent in $S_0$ is a red agent. 
Suppose that before joining $S_0$, $r$ has joined a mixed coalition 
$S_{t}$ with $t > 0$ at Step \ref{step:join} and then deviated to $S_0$ 
at Step \ref{step:leave}. Just before $r$ has left $S_{t}$, 
the ratio $\theta'$ of red agents in $S_{t}$ is at least half, 
implying that $\theta' \succ_r \frac{1}{2}$ or $\theta' = \frac{1}{2}$. 
Since $r$ joined $S_0$ at Step \ref{step:leave}, we also have $1 \succ_r \theta'$. 
Combining these yields $1 \succ_r \frac{1}{2}$. The claim is immediate when $r$ joined $S_0$ 
at Step \ref{step:outside}.
\end{proof}

We also observe that, by the construction of the algorithm, all default agents belong to the coalition whose 
red ratio is at most their virtual peak; see Figure \eqref{fig:default} for an illustration.

\begin{lemma}\label{lem:default}
For every default agent $i$ in $S_{t}$, where $S_{t} \in \pi$ with $t>0$, 
\begin{enumerate}
\item the red ratio of $S_t$ is at most $i$'s virtual peak, i.e., $\theta_R(S_{t}) \leq q_i$; and 
\item $i$ weakly prefers $S_{t}$ to $\{i\}$ and to a mixed pair.
\end{enumerate}
\end{lemma}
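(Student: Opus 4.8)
The plan is to prove the two parts together by combining a monotonicity invariant maintained by the algorithm with the discreteness of the one-blue-agent ratios. Throughout I would use that each $S_t$ with $t>0$ contains exactly one blue agent, so its red ratio has the form $\frac{m}{m+1}$, which is precisely the form taken by every virtual peak $q_i$. Since the red default agents are inserted in Step \ref{step:join} in order of non-increasing virtual peak, the minimum $q_{\min}$ of the virtual peaks over the default agents of $S_t$ is attained either by $b_t$ or by the last red agent that joins $S_t$ in Step \ref{step:join}.

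For part (1), I would prove the stronger invariant $\theta_R(S_t)\le q_{\min}$, from which $\theta_R(S_t)\le q_i$ for every default $i$ is immediate. Right after a default joins in Step \ref{step:join}, the guard $\theta_R(S_t)\le\min\{q_{r_i},q_{b_t}\}$ gives $\theta_R(S_t)\le q_{\min}$, and departures via Step \ref{step:leave} only decrease the ratio, so the invariant holds at the end of iteration $t$. It then remains to check that the agents that later join $S_t$ (as targets of Step \ref{step:leave} in a later iteration, or in Step \ref{step:outside}) cannot break it. The key point is that each such join is an IS-deviation, so every default must accept it; in particular the default $j$ with $q_j=q_{\min}$ must accept. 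Writing $q_{\min}=\frac{m^*}{m^*+1}$, if the ratio ever equalled $q_{\min}$ then adding one more red agent would raise it to $\frac{m^*+1}{m^*+2}>q_j$, which single-peakedness makes strictly worse for $j$; hence $j$ rejects and the join is blocked. Because red agents enter one at a time and both $\theta_R(S_t)$ and $q_{\min}$ have the form $\frac{m}{m+1}$, the ratio can only reach $q_{\min}$ exactly before being stopped and can never jump strictly past it, so the invariant survives to termination.

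For part (2), I would treat the mixed pair (red ratio $\frac12$) and the singleton separately. For any default $i$ versus the mixed pair I would use single-peakedness of the virtual peak: since $q_i$ is $i$'s favourite one-blue-agent ratio and $\frac12\le\theta_R(S_t)\le q_i$ by Lemma \ref{lem:half} and part (1), a short case analysis (on whether $i$'s peak lies above or below $\theta_R(S_t)$, using that $\theta_R(S_t)$ is itself a one-blue-agent ratio no farther from the peak than $q_i$) yields $\theta_R(S_t)\succsim_i \frac12$. The same comparison settles the singleton for a blue default $b_t$, whose singleton has red ratio $0$, again a one-blue-agent ratio with $0\le\theta_R(S_t)\le q_{b_t}$. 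The only case needing the algorithm's dynamics is a red default $i$ versus its singleton, which has red ratio $1$ and is \emph{not} a one-blue-agent ratio, so the virtual peak says nothing directly. Here I would argue that at the end of iteration $t$ agent $i$ cannot strictly prefer its singleton to $S_t$: otherwise $i$ would have an always-available IS-deviation to the last-resort set $S_0$ (joining $S_0$ turns $i$ into a singleton and is accepted vacuously), contradicting the exit condition of the loop in Step \ref{step:if}. After iteration $t$, $S_t$ is no longer the current coalition, so $i$ never leaves it; it can only see new agents join, and since each such join is an IS-deviation that $i$ accepts, $i$'s red ratio weakly improves at every step, so by transitivity $i$ weakly prefers the final $S_t$ to its singleton.

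The main obstacle is part (1): one must resist treating $\theta_R(S_t)\le q_i$ as obvious from the Step \ref{step:join} guard, since later IS-deviations genuinely raise the ratio. What makes the argument go through is the interplay between the acceptance requirement of the minimum-virtual-peak default and the discreteness of the ratios $\frac{m}{m+1}$, together with carefully tracking when $S_t$ gains versus loses agents (it only loses agents while it is the current coalition, and only gains afterwards). The single-peakedness case analyses in part (2) are routine by comparison.
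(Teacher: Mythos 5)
Your proof is correct and takes essentially the same route as the paper's: part (1) via the \textbf{while}-guard during formation plus the fact that any later join is an IS-deviation that a default agent would reject once it pushed the ratio past her virtual peak, and part (2) via single-peakedness for the mixed-pair and blue-singleton comparisons together with the observation that a red default never deviated to $S_0$, extended by transitivity through accepted deviations. The only difference is presentational: you make explicit the discreteness argument (all relevant ratios have the form $\frac{m}{m+1}$, so the ratio can meet $q_{\min}$ but never jump past it) that the paper's proof uses implicitly when asserting that agent $i$ "would not be willing to accept such a deviation."
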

\begin{proof}
Take any $t>0$ and take any default agent $i$ in $S_{t}$. 
Before the algorithm starts forming the next coalition $S_{t+1}$, 
the ratio $\theta_R(S_{t})$ remains below $q_i$ by the {\bf while}-condition 
in Step \ref{step:if}. After $S_{t+1}$ starts being formed, $\theta_R(S_{t})$ 
can only increase by accepting red agents from larger-indexed coalitions. 
However, if a deviation of some red agent to $S_{t}$ increases 
the fraction above $q_i$, this would mean that $i$ would not be willing 
to accept such a deviation. Thus, $\theta_R(S_{t}) \leq q_i$. 
To show the second statement, recall that $p_i \geq \frac{1}{2}$ 
by construction of the algorithm and $\theta_R(S_{t}) \geq \frac{1}{2}$ by Lemma~\ref{lem:half}. 
If $i$ is a blue agent, then by single-peakedness $i$ weakly prefers $S_t$ both to her own singleton 
and to a mixed pair. If $i$ is a red agent who has joined $S_{t}$ at Step \ref{step:join}, 
then by single-peakedness and the fact that he has not deviated to $S_0$ at Step \ref{step:leave},
 $i$ weakly prefers $S_t$ both to his singleton and to a mixed pair. 
\end{proof}

We are now ready to prove that no coalition in $\pi$ admits a deviation of a blue agent.

\begin{lemma}\label{lem:blue}
For each $S \in \pi$, there is an agent in $S$ who does not accept a deviation of a blue agent to $S$. 
\end{lemma}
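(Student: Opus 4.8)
The plan is to exhibit, for each $S\in\pi$, a single agent already in $S$ who strictly prefers $S$ to $S\cup\{b\}$ for any blue agent $b$, since such an agent does not accept the deviation. By Lemma~\ref{lem:half} and the structure of the output, every $S\in\pi$ is either a red singleton $\{r\}$ with $r\in S_0$, or a mixed coalition $S_t$ ($t>0$) containing exactly one blue agent and $r:=|S_t\cap R|\ge 1$ red agents, so that $\theta_R(S_t)=\frac{r}{r+1}$. Adding a blue agent strictly lowers the red ratio in both cases. The singleton case is immediate from Lemma~\ref{lem:singleton}: the deviation would turn $\{r\}$ into a mixed pair of red ratio $\tfrac{1}{2}$, and $r$ strictly prefers $\{r\}$ to a mixed pair, so $r$ rejects. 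It therefore remains to handle a mixed coalition $S_t$.

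For $S_t$ the deviation changes the red ratio from $\frac{r}{r+1}$ to $\frac{r}{r+2}$. I would fix any default agent $i\in S_t$ (the unique blue agent of $S_t$ is one), recall that $p_i\ge\frac{1}{2}$ by construction and $\theta_R(S_t)\le q_i$ by Lemma~\ref{lem:default}, and show $\frac{r}{r+1}\succ_i\frac{r}{r+2}$ by splitting on the position of $p_i$. If $p_i\ge\frac{r}{r+1}$, then $\frac{r}{r+2}<\frac{r}{r+1}\le p_i$, and single-peakedness immediately yields $\frac{r}{r+1}\succ_i\frac{r}{r+2}$. If instead $p_i<\frac{r}{r+1}$, I first note that $\theta_R(S_t)<q_i$ is impossible: a single-blue ratio strictly between $p_i$ and $q_i$ would be strictly closer to $p_i$ than $q_i$ is, contradicting that $q_i$ is $i$'s favourite single-blue ratio; hence $q_i=\frac{r}{r+1}$, and $p_i>\frac{1}{2}$ forces $r\ge 2$. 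Since $q_i=\frac{r}{r+1}\succ_i\frac{r-1}{r}$ (both are single-blue ratios), single-peakedness gives $p_i>\frac{1}{2}\bigl(\frac{r-1}{r}+\frac{r}{r+1}\bigr)$; and the elementary inequality $\frac{r}{r+2}\le\frac{r-1}{r}$ (equivalent to $r\ge 2$) shows $\frac{r}{r+1}$ is strictly closer to $p_i$ than $\frac{r}{r+2}$, so again $\frac{r}{r+1}\succ_i\frac{r}{r+2}$.

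The main obstacle is the second case, where $\theta_R(S_t)$ sits strictly above the peak $p_i$, so that a blue deviation moves the ratio toward $p_i$ and a naive argument would let $i$ accept. The two ingredients that resolve it are, first, that Lemma~\ref{lem:default} pins the ratio exactly at the virtual peak $q_i=\frac{r}{r+1}$, and second, that $\frac{r}{r+2}\le\frac{r-1}{r}$, i.e.\ the deviation overshoots past the adjacent single-blue ratio $\frac{r-1}{r}$ that $i$ already ranks below $\frac{r}{r+1}$; single-peakedness then transfers the strict preference from $\frac{r-1}{r}$ to the two-blue ratio $\frac{r}{r+2}$. I expect checking these comparisons to be the only genuine work, as no analysis of the new agents is needed: a single default agent blocks every blue deviation.
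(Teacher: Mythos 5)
Your overall strategy is sound and in fact closely parallels the paper's own proof: the singleton case is dispatched by Lemma~\ref{lem:singleton}, and in the mixed case both you and the paper reduce the hard sub-case (peak strictly below the ratio) to the blue agent's virtual peak $q_i$ and the adjacent single-blue ratio $\frac{r-1}{r}$, together with the inequality $\frac{r}{r+2}\le\frac{r-1}{r}$ for $r\ge 2$. The paper phrases this as a contradiction (acceptance would force $\frac{r}{r+2}>\frac{r-1}{r}$, hence $r<2$), while you argue directly, but the ingredients are the same.

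There is, however, a genuine flaw in how you justify the crucial comparison. The paper's formal definition of single-peakedness is purely ordinal: $\theta_1<\theta_2\le p_i$ or $\theta_1>\theta_2\ge p_i$ implies $\theta_2\succ_i\theta_1$. It says nothing about ratios on \emph{opposite} sides of the peak, so ``closer to $p_i$'' does not imply ``preferred'' (the distance-based phrasing appears only in the informal introduction, never in the definition or the proofs). Your first use of ``closer'' (to show $q_i=\frac{r}{r+1}$) is harmless, since there both ratios lie on the same side of $p_i$. But two later steps fail under the ordinal definition: (i) from $\frac{r}{r+1}\succ_i\frac{r-1}{r}$ you may conclude only $p_i>\frac{r-1}{r}$, not $p_i>\frac{1}{2}\bigl(\frac{r-1}{r}+\frac{r}{r+1}\bigr)$ --- the midpoint inference presumes symmetric (metric) single-peaked preferences; and (ii) the conclusion ``$\frac{r}{r+1}$ is strictly closer to $p_i$ than $\frac{r}{r+2}$, hence preferred'' is exactly a cross-peak comparison that the definition does not license. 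Fortunately the repair is local and is already implicit in your own summary: from $p_i>\frac{r-1}{r}$ and $\frac{r}{r+2}\le\frac{r-1}{r}$, both $\frac{r}{r+2}$ and $\frac{r-1}{r}$ lie weakly below the peak, so ordinal single-peakedness gives $\frac{r-1}{r}\succ_i\frac{r}{r+2}$ when $r>2$ (and $\frac{r}{r+2}=\frac{r-1}{r}$ when $r=2$); transitivity with $\frac{r}{r+1}\succ_i\frac{r-1}{r}$ then yields $\frac{r}{r+1}\succ_i\frac{r}{r+2}$, so the blue agent rejects. With that substitution your proof is correct.
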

\begin{proof}
If $S \subseteq S_{0}$ and $|S|=1$, it follows from Lemma \ref{lem:singleton} that no $r \in S$ accepts a deviation of a blue agent.
Suppose $S=S_{t}$ for some $t>0$, and assume towards a contradiction that some blue agent $b \in B$ can be accepted by all agents in $S_{t}\in \pi$ for some $t>0$. 
By Lemma \ref{lem:half}, the ratio of red agents in $S_{t}$ is at least $\frac{1}{2}$. If the fraction of red agents in $S_{t}$ is at most the peak of some agent, i.e., $\theta_R(S_{t}) \leq p_i$ for some agent $i \in S_{t}$, this means that
\[
\theta_R(S_{t}\cup \{b\})< \theta_R(S_{t}) \leq p_i, 
\]
i.e., $i$ would not accept $b$, a contradiction. 
Hence, suppose that the fraction of red agents in $S_{t}$ is beyond the maximum peak, 
i.e., $\theta_R(S_{t})> \max_{i \in S_{t}}p_i$. Take $i \in S_{t} \cap B$. 
By Lemma~\ref{lem:default}, 
\[
p_i < \theta_R(S_{t}) \leq q_i.
\]
Since $p_i \geq \frac{1}{2}$ by construction of the algorithm, we have $\theta_R(S_{t})>\frac{1}{2}$, 
which means that coalition $S_{t}$ contains one blue agent and at least two red agents. 
Hence, even if one red agent leaves the coalition, its red ratio would be at least half, i.e.,
\[
\frac{1}{2} \leq \frac{|S_{t} \cap R|-1}{|S_{t}|-1} (:=\theta^*) < p_i, 
\]
where the second inequality holds, 
since otherwise $p_i \leq \theta^*<\theta_R(S_{t}) \leq q_i$, 
which contradicts the fact that $q_i$ is $i$'s favorite ratio 
of the coalitions containing exactly one blue agent. 
Further, agent $i$ prefers $\theta_R(S_{t})$ to $\theta^*$ 
by single-peakedness and by the fact that $\theta^* < \theta_R(S_{t}) \leq q_i$. 
Since $i$ accepts the deviation of $b$ to $S_{t}$, after adding $b$ to $S_{t}$,  
the red ratio should remain at least $\theta^*$, 
implying that $\theta^* < \theta_R(S_{t}\cup \{b\})$. But this would mean that 
$\frac{|S_{t} \cap R|-1}{|S_{t}|-1} < \frac{|S_{t} \cap R|}{|S_{t}|+1}$, 
or, equivalently, $(|S_t|+1)(|S_{t} \cap R|-1)<  |S_{t} \cap R| (|S_{t}|-1)$.
This inequality can be simplified to $2 |S_t \cap R| < |S_t| +1$, 
implying $\theta^*< \frac{1}{2}$, a contradiction.
\end{proof}

\begin{figure}[htb]\label{fig:deviation}
\begin{subfigure}[t]{0.5\columnwidth}
\centering
\begin{tikzpicture}[scale=0.3, transform shape]
\draw[thick,gray] (-2,-1.5) rectangle (10,7.5);
\draw[->] (-1,0) -- (8.5,0) node[below right] {\Huge $\theta$};
\draw[-,color=gray,thick] (0,1) -- (1,2) -- (2,3) -- (3,4) -- (4,5) -- (5,6) -- (6,7) -- (8,1);
\draw [thick] (6,-.1) node[below]{\Huge $q_i$} -- (6,0.1);
\draw[-,dotted,color=gray] (6,0) -- (6,7);
\node [circle,fill=black] at (3,4) {};
\draw [thick] (3,-.1) node[below]{} -- (3,0.1);
\draw[-,dotted,color=gray] (3,0) -- (3,4);
\draw[->,thick] (3,4) -- (4,5);
\draw [thick] (4,-.1) node[below]{\Huge $\theta_R(S_{t})$} -- (4,0.1);
\draw[-,dotted,color=gray] (4,0) -- (4,5);
\end{tikzpicture}
\caption{Virtual peak of a default agent.}\label{fig:default}
\end{subfigure}%
\begin{subfigure}[t]{0.5\columnwidth}
\centering
\begin{tikzpicture}[scale=0.3, transform shape]
\draw[thick,gray] (-2,-1.5) rectangle (10,7.5);
\draw[->] (-1,0) -- (8.5,0) node[below right] {\Huge $\theta$};
\draw[-,color=gray,thick] (0,1) -- (1,2) -- (2,3) -- (3,4) -- (4,5) -- (5,6) -- (6,7) -- (8,1);

\draw [thick] (6,-.1) node[below]{\Huge $q_r$} -- (6,0.1);
\draw[-,dotted,color=gray] (6,0) -- (6,7);

\node [circle,fill=black] at (2,3) {};
\draw [thick] (2,-.1) node[below]{\Huge $\theta_R(S_{t'})$} -- (2,0.1);
\draw[-,dotted,color=gray] (2,0) -- (2,3);

\draw[->,thick] (2,3) -- (7,4);
\draw [thick] (7,-.1) node[below]{\Huge $\theta_R(S_{t})$} -- (7,0.1);
\draw[-,dotted,color=gray] (7,0) -- (7,4);

\end{tikzpicture}
\caption{Virtual peak of a new agent.}\label{fig:new}
\end{subfigure}%
\caption{The red ratio of a coalition does not exceed the virtual peak of a default agent, 
but exceeds the virtual peak of a new agent.}
\end{figure}
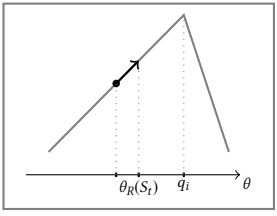
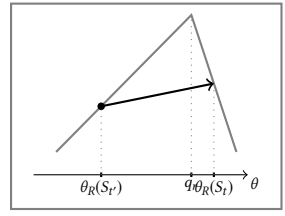

\noindent {\bf Deviations of red agents\ } 
We will now show that coalitions $S_1,\ldots,S_k$ do not admit an IS-deviation by red agents. 
To this end, we first observe that the red ratio of a coalition to which a new agent belongs 
exceeds the virtual peak of the new agent; see Figure~\eqref{fig:new} for an illustration. Due to 
single-peakedness, this means that new agents do not accept further deviations of red agents.

\begin{lemma}\label{lem:red:while}
Let $S$ be the set of agents that belong to $S_{t}$ with $t>0$ 
after the {\bf while}-loop of Step \ref{step:red:while} of Algorithm~\ref{alg:sub:IS}, 
and let $r_i$ be the last agent in $S$ to join $S_{t}$. 
Then no red agent $r_j$ with $j>i$ can join $S$ without exceeding the minimum peak, i.e., 
$
\theta_R(S \cup \{r_j\}) > \min \{q_{b_{t}},q_{r_j}\}
$.
\end{lemma}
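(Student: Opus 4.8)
The plan is to isolate the single structural feature that makes this lemma almost immediate: during the inner \textbf{while}-loop of Step~\ref{step:red:while}, the coalition $S$ contains exactly one blue agent, namely $b_t$, because it was initialized as $\{b_t\}$ and only red agents are inserted in this loop. Hence for \emph{any} red agent $r_j$ the quantity $\theta_R(S \cup \{r_j\})$ depends only on the cardinality $|S|$ and not on the identity of $r_j$:
\[
\theta_R(S \cup \{r_j\}) = \frac{|S\cap R|+1}{|S|+1} = \frac{|S|}{|S|+1}.
\]
I would record this first, since it is precisely what allows the termination inequality for one candidate to be transferred verbatim to every later candidate.

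Next I would read off the termination condition of the loop. The red agents under consideration are listed in non-increasing order of virtual peak, so let $r_{i+1}$ be the agent examined immediately after $r_i$ was added. The loop halted exactly because its guard failed for $r_{i+1}$, i.e.
\[
\theta_R(S \cup \{r_{i+1}\}) > \min\{q_{r_{i+1}},\, q_{b_t}\}.
\]
(If no such $r_{i+1}$ exists, i.e. $R'$ was already exhausted, there is no red agent of larger index left to be placed and the statement holds vacuously.) Combining this with the size-only observation gives $\theta_R(S \cup \{r_j\}) = \theta_R(S \cup \{r_{i+1}\}) > \min\{q_{r_{i+1}}, q_{b_t}\}$ for every $j$.

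Finally I would invoke the monotonicity of the sorting. Because the agents were ordered so that $q_{r_1} \geq q_{r_2} \geq \cdots$, every $j > i$ satisfies $q_{r_j} \leq q_{r_{i+1}}$, and since $\min$ is monotone in each argument, $\min\{q_{r_j}, q_{b_t}\} \leq \min\{q_{r_{i+1}}, q_{b_t}\}$. Chaining the three facts yields
\[
\theta_R(S \cup \{r_j\}) = \theta_R(S \cup \{r_{i+1}\}) > \min\{q_{r_{i+1}}, q_{b_t}\} \geq \min\{q_{r_j}, q_{b_t}\},
\]
which is exactly the claimed inequality $\theta_R(S \cup \{r_j\}) > \min\{q_{b_t}, q_{r_j}\}$. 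The argument is genuinely short; the only point requiring care is the bookkeeping of the loop counter, namely correctly identifying the ``next candidate'' $r_{i+1}$ and dispatching the boundary case in which $R'$ empties before the guard fails. The size-only observation and the monotonicity step are both routine once stated explicitly.
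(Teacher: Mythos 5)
Your proof is correct and is essentially the paper's argument: the paper also reduces the claim to the next candidate $r_{i+1}$, using that $\theta_R(S \cup \{r_j\}) = \theta_R(S \cup \{r_{i+1}\})$ (since only the coalition size matters) together with $q_{r_{i+1}} \geq q_{r_j}$ from the sorting, phrased as a contradiction rather than your direct chain of inequalities. Your explicit treatment of the boundary case where $R'$ is exhausted is a small addition of rigor, not a different approach.
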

\begin{proof}
Take any $r_j$ with $j>i$. If $\theta_R(S \cup \{r_j\}) \leq \min \{q_{b_{t}},q_{r_j}\}$, then this 
would mean that $\theta_R(S \cup \{r_{i+1}\}) \leq \min \{q_{b_{t}},q_{r_{i+1}}\}$, 
since $q_{r_{i+1}} \geq q_{r_j}$. Hence the algorithm would have added $r_{i+1}$ to $S$, a contradiction.
\end{proof}

\begin{lemma}\label{lem:new}
For all $S_{t} \in \pi$ with $t>0$ and each new agent $r\in S_t$:
\begin{enumerate}
\item the red ratio of $S_t$ exceeds $r$'s virtual peak, i.e., $\theta_R(S_{t}) > q_{r}$; 
\item $r$ is the unique new agent in $S_{t}$; 
\item $r$ weakly prefers $S_t$ to $\{r\}$ and to a mixed pair.
\end{enumerate}
\end{lemma}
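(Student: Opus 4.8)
The plan is to establish all three parts simultaneously by induction on the order in which \emph{new agents} are created over the entire execution of Algorithm~\ref{alg:sub:IS}, counting both the Step~\ref{step:leave} moves made while the main loop runs and the Step~\ref{step:outside} moves made afterwards. Processing the join events in this global order is precisely what breaks the apparent circularity between part~(1) and part~(2). The inductive claim is: whenever a red agent $r$ is added to a coalition $S_t$ with $t>0$, at that instant $S_t$ contains no other new agent, so $S_t$ coincides with its set of default agents $D_t$, and moreover $\theta_R(D_t\cup\{r\})>q_r$.

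First I would dispatch the uniqueness half. A coalition $S_t$ never loses members once round~$t$ is over (Step~\ref{step:leave} removes agents only from the coalition currently being built, and Step~\ref{step:outside} only adds), so its red ratio is non-decreasing after its creation. Hence, if some earlier new agent $r'$ already sat in $S_t$, the inductive hypothesis would give $\theta_R(S_t)>q_{r'}$ at the present moment; since $\theta_R(S_t)$ is itself a one-blue-agent ratio exceeding the closest such ratio $q_{r'}$ to $p_{r'}$, we must have $p_{r'}\le\theta_R(S_t)$, so admitting $r$ (which strictly raises the ratio) moves $r'$ strictly away from its peak and $r'$ would reject the deviation, contradicting that the move is accepted by all of $S_t$. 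Thus $S_t=D_t$ when $r$ joins, which is part~(2), and the same reasoning applied to any later join shows $r$ remains the unique new agent. I would also record, from the non-increasing virtual-peak order in which red agents enter in Step~\ref{step:join}, that $r$'s virtual peak is at most that of the first red agent $r_{\mathrm{next}}$ that round~$t$ failed to add (which exists because $r$ itself was left for a later round), i.e.\ $q_r\le q_{r_{\mathrm{next}}}$.

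The crux is part~(1). Assume for contradiction $\theta_R(D_t\cup\{r\})\le q_r$. The stopping rule of the inner loop (Step~\ref{step:red:while}) gives $\theta_R(D_t\cup\{r_{\mathrm{next}}\})>\min\{q_{r_{\mathrm{next}}},q_{b_t}\}$, and adding any single red agent to $D_t$ yields the same ratio, so $\theta_R(D_t\cup\{r\})>\min\{q_{r_{\mathrm{next}}},q_{b_t}\}$. I then split on the binding term. If $q_{r_{\mathrm{next}}}<q_{b_t}$, then $q_r\ge\theta_R(D_t\cup\{r\})>q_{r_{\mathrm{next}}}\ge q_r$, an immediate contradiction. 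If instead $q_{b_t}\le q_{r_{\mathrm{next}}}$, then the blue agent's peak was the active constraint throughout round~$t$, so the filling stopped exactly at the attainable ratio $q_{b_t}$, i.e.\ $\theta_R(D_t)=q_{b_t}$; admitting $r$ then pushes the ratio strictly above $b_t$'s unique favourite $q_{b_t}$, so $b_t$ strictly prefers $D_t$ and rejects the deviation, again contradicting acceptance. Hence $\theta_R(D_t\cup\{r\})>q_r$, and since $S_t$ acquires nothing further by part~(2), this persists to the final partition.

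For part~(3) I would exploit that $q_r$ is the one-blue-agent ratio closest to $p_r$: together with $\theta_R(S_t)>q_r$ this yields $p_r\le\theta_R(S_t)<1$, so by single-peakedness $r$ strictly prefers $S_t$ to the singleton $\{r\}$ of ratio $1$. For the mixed pair of ratio $\tfrac12$, in a Step~\ref{step:outside} join the admission test is literally $\theta_R(S_t)\succ_r\tfrac12$, while in a Step~\ref{step:leave} join $r$ strictly prefers $\theta_R(S_t)$ to $\theta_R(S_k)\ge\tfrac12$ (as $S_k$ then contains its blue agent and at least $r$); the latter forces $p_r>\theta_R(S_k)$, and then $p_r-\tfrac12\ge p_r-\theta_R(S_k)\ge\theta_R(S_t)-p_r$ shows $\tfrac12$ is no closer to $p_r$ than $\theta_R(S_t)$, giving the weak preference. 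I expect the binding-minimum dichotomy in part~(1) to be the real obstacle: the red-binding case is a one-line inequality, but the blue-binding case hinges on the subtle point that Step~\ref{step:red:while} halts exactly at the attainable ratio $q_{b_t}$, so that the blue agent, already seated at its favourite ratio, vetoes any overshoot; arranging the global induction so that part~(2) is in hand exactly when the ``no earlier new agent'' step is invoked is the other delicate point.
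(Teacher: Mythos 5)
Your parts (1) and (2) are sound and essentially follow the paper's own route: the paper also takes the first new agent to join $S_t$, uses the ordering of virtual peaks together with Lemma~\ref{lem:red:while} to get $\theta_R(D_t\cup\{r\})>q_r$, and concludes that this agent rejects every later red arrival. Your explicit dichotomy on which term of $\min\{q_{r_{\mathrm{next}}},q_{b_t}\}$ binds is in fact more careful than the paper's one-line appeal to Lemma~\ref{lem:red:while}: the blue-binding case, which you settle by noting that the one-blue-agent ratios $k/(k+1)$ are consecutive, so the round must stop exactly at $\theta_R(D_t)=q_{b_t}$ and $b_t$ then vetoes any arrival, is passed over silently in the paper's proof.

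The genuine gap is in part (3), in the Step~\ref{step:leave} case, where your argument is metric rather than ordinal. From the cross-peak preference $\theta_R(S_t\cup\{r\})\succ_r\theta_R(S_k)$ you extract the distance inequality $p_r-\theta_R(S_k)\geq\theta_R(S_t)-p_r$, and from the distance comparison $p_r-\frac{1}{2}\geq\theta_R(S_t)-p_r$ you conclude the preference $\theta_R(S_t)\succeq_r\frac{1}{2}$. Neither inference is licensed by the paper's formal definition of single-peakedness, which only requires $\theta_1<\theta_2\leq p_i$ or $\theta_1>\theta_2\geq p_i$ to imply $\theta_2\succ_i\theta_1$: comparisons between ratios on \emph{opposite} sides of the peak are completely unconstrained, so an agent with peak $0.7$ may legitimately rank $\frac{1}{2}$ above $\frac{6}{7}$. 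The ``closer to the peak'' reading appears only in the informal introduction, not in the model section, so as written your argument proves the lemma only for a strictly smaller preference class. The repair is the paper's route, and it is purely ordinal: besides $\theta_R(S_k)\geq\frac{1}{2}$ you also have the invariant $\theta_R(S_k)\leq q_r$ maintained by the {\bf while}-condition in Step~\ref{step:if} (this invariant is also what you need to make ``the latter forces $p_r>\theta_R(S_k)$'' rigorous, since that deduction requires $\theta_R(S_t\cup\{r\})>\theta_R(S_k)$ and not merely the preference). From $\frac{1}{2}\leq\theta_R(S_k)\leq q_r$ one gets $\theta_R(S_k)\succeq_r\frac{1}{2}$: if $\theta_R(S_k)\leq p_r$ this is same-side single-peakedness; if $\theta_R(S_k)>p_r$ then $p_r<\theta_R(S_k)\leq q_r$ forces $\theta_R(S_k)=q_r$ (otherwise $\theta_R(S_k)$, itself a one-blue-agent ratio, would be preferred to $q_r$), and $q_r$, being $r$'s favourite one-blue-agent ratio, is weakly preferred to $\frac{1}{2}$. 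Transitivity with the deviation preference $\theta_R(S_t\cup\{r\})\succ_r\theta_R(S_k)$ then yields $\theta_R(S_t)\succ_r\frac{1}{2}$, as required.
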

\begin{proof}
Let $r$ be the first new agent who joined $S_{t}$ at Step \ref{step:leave} or Step 
\ref{step:outside}. When $r$ joined $S_t$, his virtual peak $q_r$ 
was at most that of any other red agent in $S_t$. 
By Lemma \ref{lem:red:while}, he cannot join $S_t$ without exceeding the virtual peak $q_r$, 
i.e., $\theta_R(D_{t}\cup \{r\}) < q_r$. Thus, by single-peakedness, he does not accept any 
further deviations by red agents. Hence, $r$ is the unique new agent in $S_{t}$. To see that the 
third statement holds, note that if $r$ joined $S_{t}$ at Step \ref{step:outside}, then $r$ prefers 
$\theta_R(S_{t})$ to $\frac{1}{2}$ by the {\bf if}-condition of Step \ref{step:outside:if}. Also, $r$ 
chooses his favorite coalition among the coalitions to which he can deviate, and thus $r$ prefers 
$\theta_R(S_{t})$ to $\theta_R(S_{0}\cup\{r\})=1$. Now suppose that $r$ joined $S_{t}$ from $S_{t'}$ 
with $t'>t$ at Step \ref{step:leave}. At that point, the ratio $\theta'$ of red agents in $S_{t'}$ was 
at least half, and agent $r$ weakly prefers $S_{t'}$ to a mixed pair 
since $\frac{1}{2} \leq \theta' \leq q_r$. 
Hence, since $r$ prefers $\theta_R(S_{t})$ to $\theta'$, by transitivity $r$ prefers $\theta_R(S_{t})$ to 
$\frac{1}{2}$. Also, by the {\bf if}-condition in Step \ref{step:leave:if}, 
$r$ prefers $S_{t}$ to being in a singleton coalition.
\end{proof}

We also observe that the red ratio of a higher-indexed coalition 
is smaller than or equal to that of a lower-indexed coalition. 

\begin{lemma}\label{lem1:ratio}
Let $t' \geq t>0$ and let $S$ be the set of agents in $S_t$ just after the {\bf while}-loop of 
Step~\ref{step:red:while}. Then 
$\theta_R(D_{t'}) \leq \theta_R(S)$; in particular, $\theta_R(D_{t'}) \leq \theta_R(D_{t})$.
\end{lemma}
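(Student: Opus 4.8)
The plan is to establish the monotonicity directly from the way Algorithm~\ref{alg:sub:IS} consumes agents. Since the blue virtual peaks are sorted so that $q_{b_1}\ge q_{b_2}\ge\cdots$, red agents are taken from $R'$ in order of non-increasing virtual peak through a pointer that only advances, and each coalition stops accepting red agents as soon as its red ratio would pass the minimum of the relevant virtual peaks; hence a later coalition is bound by weakly smaller virtual peaks and should freeze at a weakly smaller red ratio. I would prove $\theta_R(D_{t'})\le\theta_R(S)$ for $t'>t$ and then obtain the borderline case $t'=t$, together with the ``in particular'' statement, by taking $S$ to be the final configuration $D_t$ of $S_t$. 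Throughout, write $\theta_R(S)=\frac{n_S}{n_S+1}$, where $S$ contains one blue agent and $n_S$ red agents, and let $r_i$ be the last red agent to have joined $S$. As $D_{t'}$ consists of $b_{t'}$ together with some $d\ge 0$ red agents, we have $\theta_R(D_{t'})=\frac{d}{d+1}$, and the case $d=0$ is immediate.

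The first and main step is to show $\theta_R(D_{t'})\le\min\{q_{b_t},q_{r_{i+1}}\}$. Because red agents leave $R'$ in strictly increasing index order and coalitions are finalized in increasing index order, every red agent joining $S_{t'}$ at Step~\ref{step:join} has index exceeding $i$; in particular the highest-indexed default red agent $r_p$ of $S_{t'}$ satisfies $p\ge i+1$, so $q_{r_p}\le q_{r_{i+1}}$. When $r_p$ was added, the guard of the inner \textbf{while}-loop (Step~\ref{step:red:while}) gave $\theta_R(C\cup\{r_p\})\le\min\{q_{r_p},q_{b_{t'}}\}$, where $C$ is the coalition just before $r_p$ joined. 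No red agent added after $r_p$ survives in $D_{t'}$ (such an agent would have index larger than $p$, contradicting maximality of $p$), so the red agents of $D_{t'}$ form a subset of those present immediately after $r_p$ joined; deleting red agents only lowers the ratio, hence $\theta_R(D_{t'})\le\min\{q_{r_p},q_{b_{t'}}\}\le\min\{q_{r_{i+1}},q_{b_t}\}$, the last step also using $q_{b_{t'}}\le q_{b_t}$.

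It remains to combine this with the stopping property of $S$ and a short integer computation. Lemma~\ref{lem:red:while} guarantees that $r_{i+1}$ cannot be added to $S$, i.e.\ $\frac{n_S+1}{n_S+2}=\theta_R(S\cup\{r_{i+1}\})>\min\{q_{b_t},q_{r_{i+1}}\}\ge\theta_R(D_{t'})=\frac{d}{d+1}$. Clearing denominators in $\frac{d}{d+1}<\frac{n_S+1}{n_S+2}$ yields $d\le n_S$, so $\theta_R(D_{t'})=\frac{d}{d+1}\le\frac{n_S}{n_S+1}=\theta_R(S)$. For the ``in particular'' claim I would run the identical argument with $S=D_t$; here the role of Lemma~\ref{lem:red:while} is taken over by the fact that at the end of the formation of $S_t$ the first disjunct of the Step~\ref{step:if} guard is false, so no still-available red agent can be added to $D_t$.

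The crux — and the only genuinely delicate point — is the index bookkeeping underlying the first step: one must verify that the pointer into $R'$ is monotone and that any agent that departs via Step~\ref{step:leave} or Step~\ref{step:outside} becomes a \emph{new} agent of a smaller-indexed coalition, so that it neither returns to $R'$ nor ever counts as a default agent elsewhere. Granting this, the default red agents of different coalitions occupy disjoint index blocks in the sorted order, which is precisely what licenses $q_{r_p}\le q_{r_{i+1}}$ and $q_{b_{t'}}\le q_{b_t}$. The remaining ingredients — the ratio bound read off from the last agent added and the arithmetic turning the strict inequality against $\frac{n_S+1}{n_S+2}$ into a weak one against $\frac{n_S}{n_S+1}$ — are routine.
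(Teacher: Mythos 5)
Your main argument for the case $t' > t$ is correct, and it takes a genuinely different route from the paper's. You argue directly: you bound $\theta_R(D_{t'})$ by the Step~\ref{step:red:while} guard at the moment the last default red agent $r_p$ of $S_{t'}$ joined, push that bound down to $\min\{q_{r_{i+1}}, q_{b_t}\}$ via the sorted peaks, and then convert the strict inequality from Lemma~\ref{lem:red:while} into the weak one by integer arithmetic. The paper instead argues by contradiction: assuming $\theta_R(D_{t'}) > \theta_R(S)$, it counts red agents (both sets contain exactly one blue agent), invokes Lemma~\ref{lem:default} to get $\theta_R(D_{t'}) \le \min_{x \in D_{t'}} q_x$, and exhibits an agent $r_j \in D_{t'} \setminus S$ whose addition to $S$ would violate Lemma~\ref{lem:red:while}. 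Both proofs rest on the same pillars (Lemma~\ref{lem:red:while}, the sorted virtual peaks, one blue agent per coalition), so for $t' > t$ the difference is organizational rather than substantive.

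However, there is a genuine gap in your treatment of the borderline case $t' = t$. You propose to obtain it ``by taking $S$ to be the final configuration $D_t$ of $S_t$,'' but that instantiation yields only the vacuous inequality $\theta_R(D_t) \le \theta_R(D_t)$. The actual claim for $t' = t$ is $\theta_R(D_t) \le \theta_R(S)$ for an \emph{arbitrary intermediate} snapshot $S$, and this is exactly the form in which the lemma is used downstream: in the proof of Lemma~\ref{thm:mix2}, $S$ is the configuration of $S_\ell$ just before an agent deviates away, and the conclusion $\theta_R(D_\ell) \le \theta_R(S)$ is applied with $t' = t = \ell$. Your opening step---``every red default agent of $S_{t'}$ has index exceeding $i$''---is precisely what fails when $t' = t$, since $D_t$ contains the red agents of $S$ itself. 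The gap is fixable with your own machinery: split on whether $D_t$ contains a default red agent who joined after the snapshot $S$. If not, then $D_t \subseteq S$ (a default agent of the final coalition never leaves, so any such agent who joined before the snapshot is still present in it), and deleting red agents from a one-blue-agent coalition only lowers the ratio. If so, the highest-indexed such agent $r_p$ does have index exceeding $i$, and your argument goes through verbatim. It is worth noting that the paper's contradiction setup gets this case for free: the hypothesis $\theta_R(D_{t'}) > \theta_R(S)$ forces $|D_{t'} \cap R| \ge |S \cap R| + 1$, which guarantees a red agent in $D_{t'} \setminus S$---that is, one who joined after the snapshot---even when the two sets overlap.
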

\begin{proof}
Assume for the sake of contradiction that $\theta_R(D_{t'})>\theta_R(S)$. 
Since $S$ and $D_{t'}$ contain exactly one blue agent, 
this means that $D_{t'}$ contains more red agents than $S$ does, i.e., 
$|S \cap R|+1 \leq |D_{t'} \cap R|$. Thus, even if we add a red agent to $S$,  
its red ratio does not exceed $\theta_R(D_{t'})$, i.e., for each 
$r \in R \setminus S$ we have
$\theta_R(S \cup \{r\}) \leq \theta_R(D_{t'})$.
Now recall that by Lemma \ref{lem:default}, the red ratio of $S_{t'}$ 
does not exceed the virtual peak of each default agent, which means that
\[
\theta_R(D_{t'}) \leq \theta_R(S_{t'}) \leq  \min_{x \in D_{t'}} q_x.
\]
Combining these observations, for every $r \in R\setminus S$ we have 
\begin{equation}\label{eq}
\theta_R(S \cup \{r\}) \leq \min_{x \in D_{t'}} q_x.
\end{equation}
Moreover, since $t=t'$, or $S_{t'}$ has been created after $S_{t}$, there is a red agent 
$r_j \in D_{t'} \setminus S$ with $q_{r_j} \leq q_{r}$ 
for every $r \in S \cap R$ and $q_{b'} \leq q_{b}$, where 
$b'$ and $b$ are the unique blue agents in $D_{t'}$ and $S$, respectively. 
Combining this with inequality~\eqref{eq} yields
\[
\theta_R(S \cup \{r_j\}) \leq \min_{x \in S \cup \{r_j\}} q_x,
\]
which contradicts Lemma \ref{lem:red:while}. 
\end{proof}

\begin{lemma}\label{lem2:ratio}
Let $t'>t>0$ with $\theta_R(D_{t'})= \theta_R(D_{t})$. 
Then some agent in $D_{t'}$ does not accept a deviation of a red agent to $S_{t'}$.
\end{lemma}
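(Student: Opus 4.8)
The plan is to exhibit one explicit default agent $x^*\in D_{t'}$ that rejects \emph{every} red addition to $S_{t'}$. First I reduce to a counting statement: both $D_{t'}$ and $D_t$ contain exactly one blue agent, so $\theta_R(D_{t'})=\theta_R(D_t)$ together with injectivity of $m\mapsto m/(m+1)$ forces $|D_{t'}\cap R|=|D_t\cap R|$; write $m$ for this common value, so $\theta_R(D_{t'})=\theta_R(D_t)=\frac{m}{m+1}$. Moreover $m\geq 1$: a blue‑seeded coalition always admits a red agent at ratio $\frac12$ (both relevant virtual peaks are at least $\frac12$), so the termination condition of $S_t$'s construction cannot hold with $D_t=\{b_t\}$. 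By Lemma~\ref{lem:default} every $i\in D_{t'}$ has $q_i\geq\theta_R(D_{t'})=\frac{m}{m+1}$, and since each $q_i$ is itself a one‑blue‑agent ratio $\frac{\ell}{\ell+1}$, exactly one of two cases holds: either $\min_{i\in D_{t'}}q_i=\frac{m}{m+1}$, or $\min_{i\in D_{t'}}q_i\geq\frac{m+1}{m+2}$. The heart of the argument is to exclude the second case.

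So suppose for contradiction that $q_i\geq\frac{m+1}{m+2}$ for all $i\in D_{t'}$. Then in particular the blue agent $b_{t'}$ satisfies $q_{b_{t'}}\geq\frac{m+1}{m+2}$, and since the blue agents are sorted by non‑increasing virtual peak and $t<t'$, also $q_{b_t}\geq q_{b_{t'}}\geq\frac{m+1}{m+2}$. Now I examine the state of the algorithm at the moment the outer loop finishes building $S_t$: at that instant $S_t$ equals $D_t$ (new agents only enter $S_t$ later, while some $S_k$ with $k>t$ is processed), so $|S_t\cap R|=m$, and the next unprocessed red agent $r_{i_0}$ must fail the join test, i.e. $\frac{m+1}{m+2}=\theta_R(D_t\cup\{r_{i_0}\})>\min\{q_{b_t},q_{r_{i_0}}\}$; such an $r_{i_0}$ exists because $t'>t$ forces $R'\neq\emptyset$ after $S_t$. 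Combined with $q_{b_t}\geq\frac{m+1}{m+2}$ this gives $q_{r_{i_0}}<\frac{m+1}{m+2}$, hence $q_{r_{i_0}}\leq\frac{m}{m+1}$. I then invoke the sorted order of the red agents: every red processed while building $S_{t+1},\dots,S_{t'}$ has index at least $i_0$, so its virtual peak is at most $q_{r_{i_0}}\leq\frac{m}{m+1}$. But the $m\geq 1$ red agents of $D_{t'}$ are precisely such agents, contradicting the assumption that each has virtual peak at least $\frac{m+1}{m+2}$. Therefore $\min_{i\in D_{t'}}q_i=\frac{m}{m+1}=\theta_R(D_{t'})$, and I fix a witness $x^*\in D_{t'}$ with $q_{x^*}=\frac{m}{m+1}$.

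It remains to verify that $x^*$ rejects. Both $S_{t'}$ and $S_{t'}\cup\{r\}$, for any red deviator $r$, contain exactly one blue agent, so their red ratios lie in the family of one‑blue‑agent ratios; on this family $\succ_{x^*}$ is single‑peaked with peak $q_{x^*}$ (the favorite among these ratios), so among such ratios that are at least $q_{x^*}$ the smaller one is preferred. Since new agents of $S_{t'}$ are red, $\theta_R(S_{t'})\geq\theta_R(D_{t'})=q_{x^*}$, while $\theta_R(S_{t'}\cup\{r\})>\theta_R(S_{t'})$; hence $\theta_R(S_{t'})\succ_{x^*}\theta_R(S_{t'}\cup\{r\})$, so $x^*$ strictly prefers $S_{t'}$ and does not accept the deviation, which is exactly the claim.

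The main obstacle is the middle paragraph: reading off the algorithm's state at the end of $S_t$'s construction (that $S_t=D_t$ there, with the pointer sitting at a red agent $r_{i_0}$ that fails the join test) and then combining the non‑increasing virtual‑peak order of the red agents with the non‑increasing order of the blue virtual peaks. The delicate point is the bookkeeping of the pointer $i_0$ in the presence of red agents that transiently join $S_t$ via Step~\ref{step:join} and subsequently leave via Step~\ref{step:leave}: I must check that such departures do not reset the pointer, so that the index‑monotonicity claim ``every default red of $D_{t'}$ has index at least $i_0$'' genuinely holds.
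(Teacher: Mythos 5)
Your proof is correct, and it runs on the same engine as the paper's: writing $m=|D_t\cap R|=|D_{t'}\cap R|$ for the common number of red default agents, both arguments show that the minimum virtual peak among the members of $D_{t'}$ lies strictly below $\theta_R(D_{t'}\cup\{r\})=\frac{m+1}{m+2}$ and then let that minimum-peak agent reject. The organization, however, is genuinely different. The paper obtains the bound by citing two facts: the monotonicity $\min_{x\in D_{t'}}q_x\leq\min_{x\in D_t}q_x$ (asserted only ``by the description of the algorithm'') and Lemma~\ref{lem:red:while}, applied with $D_t$ as the base coalition and the red agents of $D_{t'}$ as hypothetical joiners. You inline both: from the failed join test at the instant the construction of $S_t$ terminates you extract the pointer agent $r_{i_0}$, use the sorted blue peaks (under your contradiction hypothesis) to charge the failure to $q_{r_{i_0}}\leq\frac{m}{m+1}$, and use the sorted red peaks to propagate that bound to every red agent of $D_{t'}$ --- in effect re-proving Lemma~\ref{lem:red:while} and the paper's monotonicity claim in one stroke. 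What your route buys is rigor at three points the paper glosses over: (i) $m\geq 1$, without which the decisive step of either proof is vacuous (and for $m=0$ the conclusion could genuinely fail, since a lone blue agent with virtual peak at least $\frac{1}{2}$ would welcome a red deviator); (ii) the discreteness dichotomy over ratios of the form $\frac{\ell}{\ell+1}$, which sharpens the bound to the exact statement $\min_{x\in D_{t'}}q_x=\theta_R(D_{t'})$; and (iii) the endgame --- the paper's closing ``by single-peakedness'' tacitly needs exactly your observations that $q_{x^*}=\theta_R(D_{t'})\leq\theta_R(S_{t'})$ and that preferences restricted to one-blue-agent ratios are single-peaked about the virtual peak, since plain single-peakedness about $p_{x^*}$ does not by itself compare two ratios lying on opposite sides of $p_{x^*}$ (the case $\frac{m}{m+1}<p_{x^*}<\frac{m+1}{m+2}$). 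Finally, the delicate point you flag at the end is real but you resolve it correctly: agents leaving at Step~\ref{step:leave} move only to lower-indexed coalitions or to $S_0$ and are never returned to $R'$, so the pointer $i$ only advances and your index-monotonicity claim holds.
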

\begin{proof}
Since $\theta_R(D_{t'})=\theta_R(D_{t})$, both $D_{t}$ and $D_{t'}$ 
contain exactly one blue agent as well as the same number of red agents. 
By the description of the algorithm, the minimum virtual peak of agents in $D_{t'}$ 
is smaller than that of agents in $D_{t}$, i.e., 
\begin{align}
\min_{x \in D_{t'}}q_x \leq \min_{x \in D_{t}}q_x \label{eq2}.
\end{align}
Further, by Lemma \ref{lem:red:while}, no red agent in $D_{t'}$ can join $D_{t}$ 
without exceeding the minimum virtual peak, which implies that for every $r \in D_{t'}$
we have
\[
\min_{x \in D_{t} \cup \{r\}}q_x < \theta_R(D_t \cup \{r\}) = \frac{|D_{t'}\cap R|+1}{|D_{t'}|+1}. 
\]
Combining this with the inequality~\eqref{eq2} implies 
\[
\min_{x \in D_{t'}}q_x < \frac{|D_{t'}\cap R|+1}{|D_{t'}|+1}.
\]
Thus, by single-peakedness, there is an agent in $D_{t'}$ who does not accept 
a deviation of a red agent to $D_{t'}$. 
\end{proof}

We are now ready to show that no coalition in $\pi$ admits a deviation of a red agent.

\begin{lemma}\label{thm:mix1}
No agent $r \in S_{t} \cap R$ has an IS-deviation to another coalition $S_{t'}$ with $0<t' < t$. 
\end{lemma}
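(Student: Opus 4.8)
The plan is to fix a red agent $r \in S_{t} \cap R$ and a target coalition $S_{t'}$ with $0 < t' < t$, and rule out an IS-deviation of $r$ to $S_{t'}$. Such a deviation requires both that $r$ strictly prefers $S_{t'} \cup \{r\}$ to $S_{t}$ and that every agent in $S_{t'}$ accepts $r$. I would split the analysis according to whether the target $S_{t'}$ contains a new agent. Observe first that a coalition only ever gains new agents (at Step \ref{step:leave} or Step \ref{step:outside}) and never loses them once they have joined, so whether $S_{t'}$ has a new agent is already decided by the end of iteration $t'$; and if $S_{t'}$ has none, then $S_{t'} = D_{t'}$ remains unchanged from that point on.

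Consider first the case where $S_{t'}$ contains a new agent $r'$. By Lemma~\ref{lem:new} we have $\theta_R(S_{t'}) > q_{r'}$, and since adding the red agent $r$ only increases the red ratio, $\theta_R(S_{t'} \cup \{r\}) > q_{r'}$ as well. As $S_{t'} \cup \{r\}$ still contains exactly one blue agent, single-peakedness implies that $r'$ strictly prefers $S_{t'}$ to $S_{t'} \cup \{r\}$, so $r'$ does not accept the deviation. In the remaining case $S_{t'} = D_{t'}$, and I would further distinguish whether $r$ is a new or a default agent of $S_{t}$. Suppose $r$ is the (unique, by Lemma~\ref{lem:new}) new agent of $S_{t}$, so $S_{t} = D_{t} \cup \{r\}$ and $\theta_R(S_{t}) > q_{r}$. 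Since $t' < t$, Lemma~\ref{lem1:ratio} gives $\theta_R(D_{t}) \leq \theta_R(D_{t'})$, and as both $D_{t}$ and $D_{t'}$ contain exactly one blue agent this means $|D_{t} \cap R| \leq |D_{t'} \cap R|$. Adding $r$ to each side yields $\theta_R(S_{t}) = \theta_R(D_{t} \cup \{r\}) \leq \theta_R(D_{t'} \cup \{r\}) = \theta_R(S_{t'} \cup \{r\})$, so the target ratio lies weakly above $\theta_R(S_{t})$, which is itself already strictly above $q_{r}$. By single-peakedness $r$ weakly prefers $S_{t}$ to $S_{t'} \cup \{r\}$, so not even the NS-condition holds.

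The main obstacle is the last subcase, where $r$ is a default agent of $S_{t}$: here the ratio of $r$'s own coalition is not frozen, since $S_{t}$ may still acquire a new agent after iteration $t$. I would argue by looking back to the moment iteration $t$ finished. At that point the inner \textbf{while}-loop of Step~\ref{step:if} had just exited, so no red agent of $S_{t}$, and in particular $r$, had an IS-deviation to any smaller-indexed coalition; and because $S_{t'} = D_{t'}$ is unchanged throughout, the configuration of $S_{t'}$ then is exactly the one we see now. Hence either some agent of $D_{t'}$ already rejected $r$ (a condition depending only on $D_{t'}$ and $r$, and therefore still failing now), or $r$ did not strictly prefer $D_{t'} \cup \{r\}$ to its coalition, which at that time was $D_{t}$; writing $\gamma_0 = \theta_R(D_{t})$, the latter means $r$ weakly prefers $\gamma_0$ to $\theta_R(S_{t'} \cup \{r\})$.

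To close this subcase I would invoke Lemma~\ref{lem:default}, which bounds the \emph{final} ratio $\gamma = \theta_R(S_{t})$ by $\gamma \leq q_{r}$; together with $\gamma \geq \gamma_0$ this gives $\gamma_0 \leq \gamma \leq q_{r}$, so by single-peakedness $r$ weakly prefers $S_{t}$ to the coalition of red ratio $\gamma_0$. By transitivity $r$ weakly prefers $S_{t}$ to $S_{t'} \cup \{r\}$, so again the NS-condition fails. In every case the deviation fails, which is exactly the claim. I expect the only delicate point to be the bookkeeping that justifies "the configuration of $S_{t'}$ is unchanged," which rests on the monotonicity observation that earlier coalitions never shed agents.
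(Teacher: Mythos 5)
Your proof is correct, and its skeleton coincides with the paper's: the same case split (first on whether the target $S_{t'}$ contains a new agent, then on whether $r$ is a default or a new agent of $S_{t}$), resting on the same monotonicity observation that a coalition never loses members once its iteration ends, so that $S_{t'}=D_{t'}$ and the acceptance condition there are frozen. The genuine difference is the subcase where $r$ is a new agent of $S_{t}$. The paper disposes of it by replaying the algorithm's choice rule: $S_{t'}=D_{t'}$ was already available and unchanged when $r$ joined $S_{t}$ at Step~\ref{step:leave} or Step~\ref{step:outside}, and since the algorithm sends $r$ to its most preferred admissible coalition, $r$ strictly preferring $D_{t'}\cup\{r\}$ to $D_{t}\cup\{r\}$ contradicts that choice. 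You instead prove the stronger statement that the move is not even an NS-deviation: Lemma~\ref{lem1:ratio} gives $\theta_R(D_{t})\le\theta_R(D_{t'})$, hence (both having exactly one blue agent) $\theta_R(S_{t})=\theta_R(D_{t}\cup\{r\})\le\theta_R(D_{t'}\cup\{r\})$, while $\theta_R(S_{t})>q_{r}$ by Lemma~\ref{lem:new}, so single-peakedness makes $r$ weakly prefer $S_{t}$. This buys a purely static argument that does not depend on the tie-breaking in Steps~\ref{step:leave} and~\ref{step:outside}, at the cost of importing Lemma~\ref{lem1:ratio}, which the paper only needs for Lemma~\ref{thm:mix2}; your default-agent case is the paper's contradiction argument run forwards (no IS-deviation at the exit of the \textbf{while}-loop of Step~\ref{step:if}, with both the target and its acceptance condition persisting) and is logically identical. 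One gloss worth a line, though the paper glosses it at the same level: inferences such as ``$\gamma_0\le\gamma\le q_{r}$ implies $r$ weakly prefers $\gamma$'' or ``$q_{r}<\theta_R(S_{t})\le\theta_R(S_{t'}\cup\{r\})$ implies $r$ weakly prefers $\theta_R(S_{t})$'' are not direct applications of single-peakedness, since the true peak $p_{r}$ need not coincide with the virtual peak $q_{r}$; they follow because no ratio of a coalition with exactly one blue agent can lie between $p_{r}$ and $q_{r}$, as such a ratio would be preferred to $q_{r}$, contradicting the definition of the virtual peak.
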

\begin{proof}
Suppose towards a contradiction that there is such 
an agent $r \in S_{t} \cap R$ and a coalition $S_{t'}$ with $0<t'< t$. 
Let $t'$ be the largest index of coalition $S_{t'}$ to which $r$ has an IS-deviation.
Observe that if $S_{t'}$ contains a new agent, then the new agent does not accept a deviation of agent 
$r$ by Lemma~\ref{lem:new}; thus, $S_{t'}$ has no new agent and $S_{t'}=D_{t'}$.
If $r$ is a default agent in $S_{t}$, then $r$ weakly prefers $S_{t}$ to $D_{t}$ and hence $r$ strictly 
prefers $D_{t'} \cup \{r\}$ to $D_{t}$ by transitivity. Thus, agent $r$ could have deviated to $S_{t'}$ 
from $S_{t}$ at Step \ref{step:leave}, a contradiction. If $r$ is a new agent in $S_{t}$, then $r$ 
joined $S_{t}$ in Step \ref{step:leave} or Step \ref{step:outside}, but this means that $r$ could have 
deviated to $S_{t'}$ instead of $S_{t}$, 
as $r$ strictly prefers $D_{t'} \cup \{r\}$ to $D_{t} \cup \{r\}$, a contradiction.
\end{proof}

\begin{lemma}\label{thm:mix2}
No agent $r \in S_{t} \cap R$ has an IS-deviation to another coalition $S_{t'}$ with $t' > t$. 
\end{lemma}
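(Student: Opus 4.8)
The plan is to assume, for contradiction, that some red agent $r \in S_t \cap R$ has an IS-deviation to a coalition $S_{t'}$ with $t' > t$, and to rule this out by combining the \emph{acceptance} condition of the agents already sitting in $S_{t'}$ with the ratio-monotonicity established in Lemmas~\ref{lem1:ratio} and~\ref{lem2:ratio}. First I would dispose of the easy reduction: if $S_{t'}$ contains a new agent, then by Lemma~\ref{lem:new} its red ratio already exceeds that agent's virtual peak, so adding the further red agent $r$ only pushes the ratio farther from the peak and the new agent refuses; hence the move cannot be an IS-deviation and we may assume $S_{t'} = D_{t'}$.

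Next I would record the exact acceptance condition. Since $D_{t'}$ and $D_{t'}\cup\{r\}$ both contain exactly one blue agent, their red ratios are two consecutive values among the ratios of coalitions containing exactly one blue agent, and by Lemma~\ref{lem:default} every default agent $x \in D_{t'}$ satisfies $\theta_R(D_{t'}) \le q_x$. By single-peakedness such an $x$ weakly prefers the larger ratio $\theta_R(D_{t'}\cup\{r\})$ exactly when that ratio still does not exceed its virtual peak. Therefore \emph{all} agents in $D_{t'}$ accept $r$ if and only if $\theta_R(D_{t'}\cup\{r\}) \le \min_{x\in D_{t'}} q_x$, and I will use this equivalence throughout.

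Now Lemma~\ref{lem1:ratio} gives $\theta_R(D_{t'}) \le \theta_R(D_t)$. If equality holds, Lemma~\ref{lem2:ratio} immediately produces an agent of $D_{t'}$ who rejects any incoming red agent, contradicting acceptance. So assume $\theta_R(D_{t'}) < \theta_R(D_t)$; then $D_{t'}$ has strictly fewer red agents than $D_t$, whence $\theta_R(D_{t'}\cup\{r\}) \le \theta_R(D_t) \le \theta_R(S_t)$. If $r$ is a \emph{default} agent of $S_t$, Lemma~\ref{lem:default} gives $\theta_R(S_t) \le q_r$, so the target ratio and $\theta_R(S_t)$ both lie weakly below $r$'s virtual peak with the target no larger; single-peakedness then makes $r$ weakly prefer $S_t$, so there is no profitable deviation, a contradiction.

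The remaining, and main, difficulty is the case where $r$ is a \emph{new} agent of $S_t$, because then $\theta_R(S_t) > q_r$ by Lemma~\ref{lem:new} and the plain ratio comparison no longer settles $r$'s preference. Here I would appeal to the rule that, when $r$ entered $S_t$ at Step~\ref{step:leave} or Step~\ref{step:outside}, it chose its \emph{most preferred} admissible target. Since in this case $S_{t'}$ never receives a new agent, its composition $D_{t'}$ is frozen from the moment $S_{t'}$ is completed; so whenever $S_{t'}$ already exists at the time $r$ makes its choice, the agents of $D_{t'}$ accept $r$ and, by transitivity, $r$ strictly prefers $D_{t'}\cup\{r\}$ to the coalition it is leaving, so $D_{t'}\cup\{r\}$ is an admissible target strictly better than $S_t$, contradicting the choice of $S_t$. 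The one configuration this does not cover is when $S_{t'}$ is formed only \emph{after} $r$ has settled in $S_t$, i.e.\ when $r$ left a coalition $S_k$ via Step~\ref{step:leave} with $t' \ge k$; I expect this to be the crux. I would resolve it by an index-based argument comparing $q_r$ with the virtual peaks of the agents of $D_{t'}$ (equivalently, comparing red-agent indices): the stopping condition of the inner while-loop (Lemma~\ref{lem:red:while}) forces either that some agent of $D_{t'}$ — in particular its blue agent — has virtual peak below $\theta_R(D_{t'}\cup\{r\})$, so acceptance fails, or that $r$ could have been held in $S_{t'}$ exactly at its own virtual peak, so it would never have left. Either branch contradicts the existence of the IS-deviation, completing the proof.
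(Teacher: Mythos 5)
Most of your plan coincides with the paper's proof: the reduction to $S_{t'}=D_{t'}$ via Lemma~\ref{lem:new}, the explicit acceptance criterion (which the paper uses implicitly, and which you package cleanly), the equal-ratio case via Lemma~\ref{lem2:ratio}, the default-agent case via Lemmas~\ref{lem:default} and~\ref{lem1:ratio}, and the ``most preferred admissible target'' argument for new agents whose target coalition was already complete when they moved (this is also how the paper disposes of Step~\ref{step:outside} deviators and shows $\ell \le t'$). The gap is exactly where you place it: a new agent $r$ that entered $S_t$ at Step~\ref{step:leave}, leaving a coalition $S_\ell$ whose composition at that instant was $S$, with $t' \ge \ell$. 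Your proposed dichotomy is not a proof, and neither of its branches can be derived from what you have set up.

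What is missing is the step that makes Lemma~\ref{lem:red:while} (a statement about $S\cup\{r_j\}$) and the acceptance condition (a statement about $D_{t'}\cup\{r\}$) refer to coalitions with the same number of red agents; without that, the two ratios are incomparable and no contradiction follows. The paper gets this as follows: since $\theta_R(S)\le q_r$ and $r$ strictly preferred $\theta_R(S_t)$ to $\theta_R(S)$ when he moved, profitability of the deviation, $\theta_R(D_{t'}\cup\{r\})\succ_r\theta_R(S_t)$, forces via single-peakedness that $\theta_R(D_{t'}\cup\{r\})>\theta_R(S)$, i.e.\ $\theta_R(D_{t'})\ge\theta_R(S)$; Lemma~\ref{lem1:ratio} applied to $S$ gives the reverse inequality, so $\theta_R(D_{t'})=\theta_R(S)$, and the target ends up with exactly as many red agents as $S$ had when $r$ left. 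Since $r\in S\setminus D_{t'}$, counting then yields a red agent $r_j\in D_{t'}\setminus S$, which joined after every agent of $S$ and hence has a later index. Only now do the two sides connect: acceptance gives $\theta_R(D_{t'}\cup\{r\})\le\min_{x\in D_{t'}}q_x$, the size equality gives $\theta_R(S\cup\{r_j\})=\theta_R(D_{t'}\cup\{r\})$, and $q_{b_{t'}}\le q_{b_\ell}$, so $\theta_R(S\cup\{r_j\})\le\min\{q_{b_\ell},q_{r_j}\}$, contradicting Lemma~\ref{lem:red:while}. (The paper also pins down $t'=\ell$ with Lemma~\ref{lem2:ratio} and excludes $\theta_R(S)=q_r$ along the way.) In your sketch, the branch ``the blue agent's virtual peak lies below $\theta_R(D_{t'}\cup\{r\})$'' presupposes precisely this size equality, and the branch ``$r$ could have been held at its own virtual peak and would never have left'' is, in the paper, only the auxiliary observation ruling out $\theta_R(S)=q_r$, not a terminal contradiction; you provide no route to either, and profitability of the deviation --- which is what drives the forcing step --- is never invoked in your crux case at all.
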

\begin{proof}
Suppose towards a contradiction that there is such an agent $r \in S_{t} \cap R$ and 
a coalition $S_{t'}$. Let $t'>t$ be the 
smallest index of coalition $S_{t'}$ to which $r$ has an IS-deviation. Again, if $S_{t'}$ contains a 
new agent, then $r$ cannot deviate to $S_{t'}$; thus, $S_{t'}=D_{t'}$, and 
$\theta_R(S_{t'})=\theta_R(D_{t'}) \leq \theta_R(D_{t}) \leq \theta_R(S_{t})$ by Lemma~\ref{lem1:ratio}.

First, consider the case where agent $r$ is a default agent in $S_{t}$ and $t>0$. 
Then, by Lemma \ref{lem:default} and by the fact that $\theta_R(S_{t'}) \leq \theta_R(S_{t})$, we have
$\theta_R(S_{t'}) \leq \theta_R(S_{t}) \leq q_r$. 
Since $r$'s preferences are single-peaked, 
$r$ has an incentive to join $S_{t'}$ only if $\theta_R(S_{t'})=\theta_R(S_{t})$; 
however, by Lemma \ref{lem2:ratio} this implies that there is an agent in $S_{t'}$ 
who is not willing to accept $r$'s deviation, a contradiction. 

Second, consider the case where agent $r$ joined $S_{t}$ at Step \ref{step:leave}. Let 
$S_{\ell}$ be the coalition to which the red agent $r$ belonged before joining $S_{t}$, 
and let $S$ be the set of agents in $S_{\ell}$ just before $r$ deviated from $S_{\ell}$ to $S_t$ at Step 
\ref{step:leave}. Note that $S_{\ell}$ is a coalition that has been created before $S_{t'}$ emerged, 
i.e., $\ell \leq t'$, since otherwise $r$ would have deviated to $S_{t'}$ instead of $S_{t}$ at Step 
\ref{step:leave}. By Lemma \ref{lem1:ratio}, $\theta_R(D_{t'})$ is at most $\theta_R(S)$
and $\theta_R(S)\le q_r$ by the description of the algorithm. Now we have
$\theta_R(D_{t'}) \leq \theta_R(S) \leq q_r < \theta_R(S_{t})$.
 
Recall that $r$ strictly prefers $S_{t}$ to $S$; thus, by single-peakedness, 
$r$ has an incentive to deviate from $S_{t}$ to $S_{t'}$ only if 
$\theta_R(D_{t'}) = \theta_R(S)$. Since $\theta_R(D_{\ell}) \leq \theta_R(S)$ 
by Lemma \ref{lem1:ratio}, this means that $\theta_R(D_{t'})= \theta_R(D_{\ell})$. 
Also, if $\ell < t'$, some agent in $D_{t'}$ does not accept the deviation of $r$ 
by Lemma \ref{lem2:ratio}, and hence $\ell = t'$. Thus, agent $r$ deviated from 
$S_{t'}$ to $S_t$ and later the coalition $S_{t'}$ regained the same number of red agents as before. 
Now, if $\theta_R(S)=q_r$, then $r$ would not have left the coalition $S$ at Step \ref{step:leave}; 
hence $\theta_R(S)<q_r$, which implies 
$\theta_R(D_{t'} \cup \{r\}) \leq q_r$. 
Recall that by Lemma \ref{lem:default}, the red ratio of $D_{t'}$ 
is at most the minimum virtual peak, i.e., 
$\theta_R(D_{t'}) \leq \min_{x \in D_{t'}} q_x$. 
If adding $r$ to $D_{t'}$ exceeds the minimum, some agent would not 
accept a deviation of $r$; thus, we have 
\begin{equation}\label{eq:red}
\theta_R(D_{t'}\cup \{r\}) \leq \min_{x \in D_{t'} \cup \{r\}} q_x.
\end{equation}
Now, since $\theta_R(S \setminus \{r\})<\theta(D_{t'})$, there is at least one red agent 
$r_j \in D_{t'}$ who does not belong to $S$. By \eqref{eq:red}, 
\[
\theta(S \cup \{r_j\}) = \theta(D_{t'}\cup \{r\}) \leq \min_{x \in D_{t'} \cup \{r\}} q_x \leq \min_{x \in S \cup \{r_j\}} q_x,
\]
contradicting Lemma \ref{lem:red:while}. 

Finally, consider the case where $r$ joined $S_{t}$ at Step \ref{step:outside}. Since 
$S_{t'}$ does not contain any new agents, agent $r$ could have joined $S_{t'}$ instead of $S_t$ at Step 
\ref{step:outside}, a contradiction.
\end{proof}


\subsection{Algorithm for individual stability}
We will now construct an individually stable outcome
using the algorithm described in Section~\ref{sec:mixed} as a subroutine. 
Suppose that we are given a diversity game $(R,B,(\succ_i)_{i \in R\cup B})$. 
For each $i \in R \cup B$, we denote by $\succ^B_i$ the preference over 
the ratios of the blue agents in each coalition: $\theta_1 \succ^B_i \theta_2$ if and only if 
$(1-\theta_1) \succ_i (1-\theta_2)$. 

\begin{algorithm}
\SetKw{And}{and}
\SetKw{None}{None}
\SetKwInOut{Input}{input} 
\SetKwInOut{Output}{output}
\caption{Algorithm for an individually stable outcome}\label{alg:IS}
\Input{A single-peaked diversity game $(R,B,(\succ_i)_{i \in R\cup B})$}
\Output{$\pi$}
	make mixed coalitions with red majority, i.e., set $\pi_R=\HALF(R,B,(\succ_i)_{i \in R\cup B})$ \label{step:red}\;
	let $R_{\remain} \leftarrow R \setminus \bigcup_{S \in \pi_R}S$ and $B_{\remain}\leftarrow B \setminus \bigcup_{S \in \pi_R}S$\;
	make mixed coalitions with blue majority, i.e., set $\pi_B=\HALF(B_{\remain},R_{\remain},(\succ^B_i)_{i \in R_{\remain} \cup B_{\remain}})$\;
	set $R_{\remain} \leftarrow R_{\remain} \setminus  \bigcup_{S \in \pi_B}S$, and $B_{\remain}\leftarrow B_{\remain} \setminus  \bigcup_{S \in \pi_B}S$\;
	make mixed pairs from remaining agents in $R_{\remain}$ and $B_{\remain}$ who prefer a coalition of ratio $\frac{1}{2}$ to his or her own singleton; add them to $\pi_{\pair}$ \label{step:pair}\;
	set $R_{\remain} \leftarrow R_{\remain} \setminus  \bigcup_{S \in \pi_{\pair}}S$, and $B_{\remain} \leftarrow B_{\remain} \setminus \bigcup_{S \in \pi_{\pair}}S$\;
	{\tt //  let the remaining agents deviate to mixed coalitions as long as they prefer the deviating coalition to his or her singleton}\;
	\ForEach{$A \in \{R,B\}$}{
	\While{there is a agent $i \in A_{\remain}$ and a mixed coalition $S \in \pi_A$ such that $S \cup \{i\}$ is individually rational and all agents in $S$ accept a deviation of $i$ to $S$\label{step:lastdeviation}}{
	choose such pair $i$ and $S$ where $\theta_A(S \cup \{i\})$ is $i$'s most preferred ratio among the coalitions $S$ satisfying the above\;
	$\pi_A \leftarrow \pi_A \setminus \{S\} \cup \{S \cup \{i\}\}$, and $A_{\remain} \leftarrow A_{\remain} \setminus \{i\}$\label{step:outside2}\;
	}
	}
	put all the remaining agents in $R_{\remain}$ and $B_{\remain}$ into singletons, and add them to $\pi_{\single}$\;
	return $\pi=\pi_R \cup \pi_B  \cup \pi_{\pair} \cup \pi_{\single}$\;
\end{algorithm}

Now let $\pi_R$, $\pi_B$, $\pi_{\pair}$, and $\pi_{\single}$ denote the partitions computed 
by Algorithm \ref{alg:IS}, and let $\pi=\pi_R \cup \pi_B \cup \pi_{\pair} \cup \pi_{\single}$.  
Arguing as in the previous section, we can establish the following properties.

\begin{lemma}\label{lem}
For each $A \in \{R,B\}$ and each coalition $S\in\pi_A$ it holds that:
\begin{enumerate}
\item[(a)] $S$ is individually rational; 
\item[(b)] if $i$ joined $S$ at Step \ref{step:outside2}, 
        then $i$ does not accept a further deviation of an agent in $A$ to $S$; 
\item[(c)] there is an agent who does not admit a deviation 
             of an agent in $N \setminus A$;
\item[(d)] for each agent $i \in S$ who joined $S$ 
     before Step~\ref{step:outside2} of Algorithm \ref{alg:IS}, $i$ weakly prefers $S$ to a mixed pair. 
\end{enumerate}
\end{lemma}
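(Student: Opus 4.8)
The plan is to prove the four claims of Lemma~\ref{lem} for an arbitrary $A\in\{R,B\}$, exploiting the fact that the construction of $\pi_A$ is exactly the run of Algorithm~\ref{alg:sub:IS} (HALF) on the relevant subinstance, so that Lemmas~\ref{lem:half}--\ref{thm:mix2} apply verbatim, modulo the swap of the roles of red and blue that is encoded in $\succ^B_i$. The case $A=R$ is immediate: $\pi_R=\HALF(R,B,(\succ_i))$, so each $S\in\pi_R$ is one of the coalitions $S_1,\dots,S_k$, and the two extra deviation rounds in Step~\ref{step:outside2} of Algorithm~\ref{alg:IS} are of the same form as Step~\ref{step:outside} of HALF. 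For $A=B$, I would first record the observation that running $\HALF(B_{\remain},R_{\remain},(\succ^B_i))$ is symmetric to the red case with the red ratio $\theta_R$ replaced by the blue ratio $1-\theta_R$ throughout; since single-peakedness of $\succ_i$ in $\theta_R$ is equivalent to single-peakedness of $\succ^B_i$ in $1-\theta_R$, every earlier lemma carries over.

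With this reduction in place I would treat the four parts in turn. For part~(a), individual rationality, I would invoke Lemma~\ref{lem:default}(2) and Lemma~\ref{lem:new}(3): every default agent and the unique new agent of each $S_t$ weakly prefers $S_t$ to its own singleton, and the only additional memberships are created in Step~\ref{step:outside2}, whose \textbf{while}-condition explicitly requires $S\cup\{i\}$ to be individually rational. For part~(b), I would argue exactly as in Lemma~\ref{lem:new}(1): an agent $i$ who joins $S$ at Step~\ref{step:outside2} pushes the ratio of its own type strictly beyond its virtual peak, so by single-peakedness $i$ rejects any further same-type deviation; moreover, because Step~\ref{step:outside2} always selects the deviation yielding $i$'s most preferred ratio, no later same-type agent can be accepted by $i$. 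For part~(c), the cross-type rejection, I would appeal to Lemma~\ref{lem:blue}: in the run producing $\pi_A$, the agents of type $A$ are the ``red'' agents of HALF and the agents of $N\setminus A$ are the ``blue'' agents, so Lemma~\ref{lem:blue} (after the symmetry swap) gives, for each $S\in\pi_A$, an agent who rejects a deviation of an agent of $N\setminus A$; I must only check that the two rounds of Step~\ref{step:outside2} do not destroy this witness, which holds because those rounds only add agents of type $A$, weakly increasing the $A$-ratio and so preserving the rejecting agent's single-peaked objection. Part~(d) follows from Lemma~\ref{lem:default}(2) and Lemma~\ref{lem:new}(3) for every agent who joined $S$ within the HALF subroutine (all default and new agents are shown there to weakly prefer $S_t$ to a mixed pair), and the clause ``before Step~\ref{step:outside2}'' excludes precisely the agents for whom this guarantee was not established.

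The main obstacle I anticipate is part~(c), and specifically the interaction between the symmetry reduction and the later deviation rounds. The earlier lemmas are stated only for the red-majority run of HALF, so I must be careful that translating them through $1-\theta_R$ does not silently change any inequality direction, and that the agents added in Step~\ref{step:outside2} (which occurs in Algorithm~\ref{alg:IS}, \emph{after} both calls to HALF) cannot upset the rejection witness guaranteed inside HALF for $\pi_B$. Concretely, a deviation considered in Step~\ref{step:outside2} for $A=B$ adds a blue agent to a blue-majority coalition, which only increases its blue ratio, so the single-peaked agent who rejected a red deviation before the round still rejects it afterward; I would phrase this monotonicity argument carefully rather than re-running the entire proof of Lemma~\ref{lem:blue}. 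The remaining parts are essentially bookkeeping on top of the established HALF lemmas, so I expect them to be short once the symmetry dictionary between $\succ_i$ and $\succ^B_i$ is fixed at the outset.
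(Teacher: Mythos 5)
Your proposal matches the paper's proof essentially step for step: the same reduction to $A=R$, claim~(a) from Lemmas~\ref{lem:default} and~\ref{lem:new} together with the individual-rationality requirement in the {\bf while}-condition of Step~\ref{step:lastdeviation}, claim~(b) by re-running the virtual-peak argument of Lemma~\ref{lem:new} (via Lemma~\ref{lem:red:while}) for the first agent joining at Step~\ref{step:outside2}, claim~(c) by reusing the argument of Lemma~\ref{lem:blue}, and claim~(d) from Lemmas~\ref{lem:default} and~\ref{lem:new}. Two details the paper makes explicit that you should fold in when writing this up: in (b) the case $p_i \leq \frac{1}{2}$ is dispatched separately by single-peakedness (the virtual-peak-minimality argument only applies to agents that HALF actually processed), and in (d) one needs acceptance-plus-transitivity to carry the mixed-pair guarantee over to the final, enlarged coalition --- the same preservation point you already spell out for (c).
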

\begin{proof}
Without loss of generality suppose that $A$ is the set of red agents. Claim~(a) 
(individual rationality) holds due to Lemmas~\ref{lem:default} and~\ref{lem:new} 
and the {\bf while}-condition in Step \ref{step:lastdeviation} of Algorithm~\ref{alg:IS}.

To show claim~(b), suppose that a red agent $r$ is the first agent who joined 
$S\in \pi_R$ at Step \ref{step:outside2}. By single-peakedness, the claim holds when 
$p_r \leq \frac{1}{2}$; so suppose $p_r > \frac{1}{2}$. Since $r$ did not belong 
to any of the coalitions in $\pi_R$ before, his virtual peak $q_r$ is 
at most that of any other red agent in $S$. Since all agents in $S \setminus \{r\}$ 
accept $r$ at Step \ref{step:lastdeviation} of Algorithm \ref{alg:IS}, 
$S$ did not contain a new agent by Lemma \ref{lem:new}, so the red ratio of 
$S \setminus \{r\}$ is at most the minimum virtual peak of agents 
in $S \setminus \{r\}$ by Lemma \ref{lem:default} and this remains true after accepting $r$. 
However, by Lemma \ref{lem:red:while}, agent $r$ cannot join $S$ without exceeding 
either the virtual peak of the unique blue agent or the virtual peak $q_r$; 
thus, $\theta_R(S) < q_r$. By single-peakedness, we conclude that $r$ 
does not accept any further deviation of red agents. 

To prove claim~(c), we can use the argument in the proof of Lemma \ref{lem:blue}. 
To establish claim~(d), we recall that each agent $i \in S$ 
weakly prefers $S$ to a mixed pair by Lemmas~\ref{lem:default} and~\ref{lem:new}; 
by transitivity, this still holds after accepting the deviation of red agents at 
Step \ref{step:outside2} of Algorithm \ref{alg:IS}. 
\end{proof}

\begin{lemma}\label{lem:pair}
Mixed pairs created at Step \ref{step:pair} of Algorithm \ref{alg:IS} do not admit an IS-deviation.
\end{lemma}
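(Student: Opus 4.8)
The plan is to show that no agent $i$ can IS-deviate into a mixed pair $S=\{r,b\}\in\pi_{\pair}$, where $r$ is red and $b$ is blue, so $\theta_R(S)=\frac{1}{2}$. Since the pairs created at Step~\ref{step:pair} are never touched again, it suffices to rule out such deviations in the final partition $\pi$, and I will do so using only the \emph{acceptance} half of the IS-deviation condition: I will argue that whenever both $r$ and $b$ accept $i$, the peaks $p_r,p_b$ are forced to lie strictly on the same side of $\frac{1}{2}$ in a way that is incompatible with $r$ and $b$ both being unassigned at Step~\ref{step:pair}. The argument splits into two cases according to the colour of $i$, and in each case the constraints on $i$ itself (the NS part of the deviation) are never needed.

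First I translate acceptance into peak constraints. If $i$ is red, then $\theta_R(S\cup\{i\})=\frac{2}{3}$, and for both $r$ and $b$ to accept they must prefer $\frac{2}{3}$ to $\frac{1}{2}$; since these ratios differ, single-peakedness forces $p_r,p_b>\frac{7}{12}>\frac{1}{2}$. Symmetrically, if $i$ is blue, then $\theta_R(S\cup\{i\})=\frac{1}{3}$, and acceptance forces $p_r,p_b<\frac{5}{12}<\frac{1}{2}$. So in either case the two members of the pair have peaks strictly on the same side of $\frac{1}{2}$.

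The crux is a structural fact about a single invocation of $\HALF$: it halts as soon as \emph{either} its pool of majority agents \emph{or} its pool of anchor agents is exhausted, so among the agents it can process, the unassigned leftovers are all of one colour. Concretely, the first call $\HALF(R,B,\cdot)$ processes only red and blue agents whose peak is $\geq\frac{1}{2}$; if its main \textbf{while}-loop halts with the red pool empty then every red agent of peak $\geq\frac{1}{2}$ has been assigned (hence lies in a coalition of $\pi_R$ and is not among the leftovers $R_{\remain}$), whereas if the red pool is non-empty then the blue pool is empty and every blue agent of peak $\geq\frac{1}{2}$ has been assigned; the subsequent outside-deviation loop only moves red agents and cannot un-assign a blue agent, so the dichotomy is preserved. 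An agent whose peak lies strictly on the wrong side of $\frac{1}{2}$ is never a candidate for the corresponding call and passes through it untouched. In the red-$i$ case, $r$ and $b$ both have peak $>\frac{1}{2}$, so both are candidates only for the first call; that call cannot leave both a red and a blue agent of peak $>\frac{1}{2}$ unassigned, contradicting $r\in R_{\remain}$ and $b\in B_{\remain}$ at Step~\ref{step:pair}. The blue-$i$ case is identical after exchanging the colours, applying the same dichotomy to the second call $\HALF(B_{\remain},R_{\remain},(\succ^B_i)_{i})$ and to peaks $<\frac{1}{2}$.

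The main obstacle is stating and verifying this single-colour-leftover fact precisely. One must track both termination conditions of $\HALF$ (the main \textbf{while}-loop and the outside-deviation loop), confirm that only agents with peak on the appropriate side of $\frac{1}{2}$ are ever candidates, and check that the two calls act on disjoint candidate pools, so that an agent left over by the first call with peak $>\frac{1}{2}$ is genuinely ignored by the second. The strict inequalities $p>\frac{7}{12}$ and $p<\frac{5}{12}$ extracted from acceptance are exactly what keeps both pair members away from the boundary $p=\frac{1}{2}$, where an agent could in principle be a candidate for both calls and the dichotomy would not immediately apply.
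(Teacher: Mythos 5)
Your proof is correct and takes essentially the same route as the paper's: both arguments rest on the fact that each call to $\HALF$ exhausts one colour's pool of candidates, so the two members of any pair formed at Step~\ref{step:pair} cannot both have peaks strictly on one side of $\frac{1}{2}$ (the paper states this directly as the pair having \emph{opposite} peaks, one $\geq\frac{1}{2}$ and one $\leq\frac{1}{2}$, and concludes that one member rejects any deviator; you run the contrapositive). Two minor caveats: the bounds $p_r,p_b>\frac{7}{12}$ and $p_r,p_b<\frac{5}{12}$ presume distance-based preferences, whereas the paper's single-peakedness is purely ordinal and yields only the strict inequalities $p>\frac{1}{2}$ (resp.\ $p<\frac{1}{2}$), which is in fact all your argument uses; and your justification of the leftover dichotomy overlooks Step~\ref{step:single:woman} of $\HALF$, which can un-assign a lone blue agent after the loops---but since that step fires only when the red pool is already empty, the dichotomy (and hence your proof) survives.
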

\begin{proof}
Let $R'$ and $B'$ be the set of red and blue agents in $R_{\remain}$ and $B_{\remain}$ 
just before Step \ref{step:pair} of Algorithm \ref{alg:IS}, respectively.
Note that both $R'$ and $B'$ are non-empty only if red agents 
(respectively, blue agents) are left when mixed coalitions with red majority have been formed, 
and blue agents (respectively, red agents) are left when mixed coalitions with blue majority 
have been formed. So, the red agents in $R'$ and the blue agents in $B'$ 
have opposite peaks, i.e., we have either
\begin{itemize}
\item $p_r \geq \frac{1}{2}$ for all $r \in R'$ and $p_b \leq \frac{1}{2}$ for all $b \in B'$; or 
\item $p_r \leq \frac{1}{2}$ for all $r \in R'$ and $p_b \geq \frac{1}{2}$ for all $b \in B'$.
\end{itemize}
Thus, for each agent who would like to join, 
one of the agents in the pair would not accept such a deviation. 
\end{proof}

\begin{lemma}
The partition $\pi$ is individually stable.
\end{lemma}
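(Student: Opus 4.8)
The plan is to assume, for contradiction, that some agent $i$ has an IS-deviation from $\pi(i)$ to a target $T\in\pi\cup\{\emptyset\}$, and to rule out every possibility for $T$ by invoking the structural lemmas already established. First I would dispose of $T=\emptyset$: every coalition of $\pi$ is individually rational---those in $\pi_R$ and $\pi_B$ by Lemma~\ref{lem}(a), the mixed pairs by the selection rule of Step~\ref{step:pair}, and the singletons trivially---so no agent strictly prefers standing alone to $\pi(i)$.

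Next I would treat the case where $T$ is a mixed coalition, say $T\in\pi_R$ (the case $T\in\pi_B$ being the mirror image under the colour-swapping reduction $\succ^B_i$, for which every lemma of Section~\ref{sec:mixed} holds verbatim with the two colours interchanged, and for which Lemma~\ref{lem} is already stated for both $A\in\{R,B\}$). A blue (opposite-type) deviator is refused by the guarding agent of Lemma~\ref{lem}(c). A red (same-type) deviator $r$ is handled by splitting on the make-up of $T$: if $T$ contains a new agent then that agent's virtual peak is already strictly exceeded by $\theta_R(T)$ (Lemma~\ref{lem:new}, or Lemma~\ref{lem}(b) if it entered at Step~\ref{step:outside2}), so by single-peakedness it rejects any further red agent; and if $T$ consists solely of default agents then Lemmas~\ref{lem:default} and~\ref{lem:red:while} together show that adding $r$ pushes the red ratio past some default agent's virtual peak, so again $r$ is refused. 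Finally, a red agent already inside $\pi_R$ that wants to migrate to another $\pi_R$-coalition is excluded by Lemmas~\ref{thm:mix1} and~\ref{thm:mix2}.

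I would then handle the remaining targets. If $T$ is a mixed pair, Lemma~\ref{lem:pair} applies directly. If $T=\{j\}$ is a singleton, I would split on the types of $i$ and $j$. When they share a type, $\{i,j\}$ has exactly the same red ratio as $i$'s own singleton, so individual rationality of $\pi(i)$ forces $i$ to weakly prefer $\pi(i)$ to $\{i,j\}$ and the deviation is not improving. When they differ, $\{i,j\}$ is a mixed pair of ratio $\frac{1}{2}$; any agent $i$ placed before Step~\ref{step:outside2} weakly prefers its coalition to a mixed pair by Lemma~\ref{lem}(d) and so gains nothing, while for agents that survived as singletons or entered at Step~\ref{step:outside2} I would reuse the opposite-peak argument of Lemma~\ref{lem:pair}: any two agents that both prefer $\frac{1}{2}$ to their singletons would already have been matched at Step~\ref{step:pair}, so one of $i,j$ does not accept (or does not profit from) the pair.

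The main obstacle is reconciling the invariants of Section~\ref{sec:mixed}, which are proved for the coalitions $S_0,\dots,S_k$ exactly as returned by HALF, with the extra agents that the outer algorithm inserts at Step~\ref{step:outside2}. The clean way through is to observe that a coalition which grew at Step~\ref{step:outside2} acquires a new agent that subsequently refuses all same-type entrants (Lemma~\ref{lem}(b)), whereas a coalition that did not grow is identical to its HALF output and hence still obeys Lemmas~\ref{thm:mix1}--\ref{thm:mix2}; combined with the fact that Step~\ref{step:outside2} always installs each deviator into its most preferred admissible coalition, this rules out the newly created migration possibilities. Making the red/blue symmetry fully rigorous (tracking how $\succ^B_i$ exchanges ``red ratio $\geq\frac{1}{2}$'' with ``blue ratio $\geq\frac{1}{2}$'') and the careful bookkeeping in the singleton case are where most of the remaining effort lies.
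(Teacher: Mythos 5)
Your overall strategy---an exhaustive case analysis over possible deviation targets, discharged by the structural lemmas---matches the paper's, which instead organizes the cases by where the deviating agent currently sits (the paper first eliminates red-only coalitions, pairs, and $\pi_B$ as targets, then splits on whether $r$ lies in $\pi_R$ before or after Step~\ref{step:outside2}, in $\pi_B$, in $\pi_{\pair}$, or in $\pi_{\single}$). The source-based decomposition is not cosmetic: it is what lets the paper argue on the \emph{incentive} side where the acceptance side fails.

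The concrete gap is your treatment of a red deviator $r$ from outside $\pi_R$ targeting a coalition $T\in\pi_R$ consisting solely of default agents. You claim that Lemmas~\ref{lem:default} and~\ref{lem:red:while} force the red ratio of $T\cup\{r\}$ past \emph{some default agent's} virtual peak, so that $r$ is refused. But Lemma~\ref{lem:red:while} only guarantees $\theta_R(T\cup\{r\})>\min\{q_{b_t},q_{r}\}$, and the minimum may well be attained at $q_r$, the deviator's own virtual peak; in that case every member of $T$ can happily accept $r$ (e.g.\ $T=\{b_1,r_1,r_2\}$ with $q_{b_1}=4/5$, $q_{r_1}=q_{r_2}=3/4$ and a deviator with $q_r=2/3$: adding $r$ moves the ratio from $2/3$ to $3/4$, which every member of $T$ weakly prefers). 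Moreover, Lemma~\ref{lem:red:while} is stated only for the agents in the sorted list processed by $\HALF$, so it does not even apply to red agents with peak below $\frac12$ (those sitting in $\pi_B$). The paper closes these cases differently: for agents in $\pi_B$ it argues on the deviator's side that $p_r\le\frac12$ together with Lemma~\ref{lem}(d) kills the incentive to join any red-majority coalition; for agents in $\pi_{\pair}$, $\pi_{\single}$, or placed at Step~\ref{step:outside2}, it argues that an accepted and profitable move into $T$ would already have been executed at Step~\ref{step:outside} of Algorithm~\ref{alg:sub:IS} or Step~\ref{step:outside2} of Algorithm~\ref{alg:IS} (using Lemma~\ref{lem}(b)/(c) to ensure $T$ has not since acquired a guard that invalidates the comparison). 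You do mention this greedy/termination argument in your last paragraph, but only as a patch for ``newly created migration possibilities''; it is in fact the load-bearing argument for the whole sub-case, and your stated refusal mechanism for default-only targets is false as written. The remaining cases (empty set, $\pi_B$ targets via Lemma~\ref{lem}(c), pairs via Lemma~\ref{lem:pair}, singleton targets via Lemma~\ref{lem}(d) and the Step~\ref{step:pair} matching argument, and intra-$\pi_R$ migration via Lemmas~\ref{thm:mix1} and~\ref{thm:mix2}) are handled correctly and essentially as in the paper.
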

\begin{proof}
We first observe that $\pi$ is individually rational. 
Indeed, all singletons in $\pi$ are individually rational. 
Also, all mixed pairs are individually rational, 
since they prefer a coalition of ratio $\frac{1}{2}$ to being in a singleton coalition. 
All coalitions in $\pi_R$ and $\pi_B$ are individually rational by Lemma \ref{lem}(a).

We will now show that $\pi$ satisfies individual stability. 
Take any red agent $r \in R$. We note that by individual rationality $r$ has no incentive to deviate to 
red-only coalitions; also, by Lemma \ref{lem:pair}, $r$ cannot deviate to a coalition in 
$\pi_{\pair}$. Further, by Lemma \ref{lem}(c), agent $r$ has no IS-deviation to a coalition in 
$\pi_B$. Thus, it remains to check whether $r$ has an IS-deviation to a coalition in $\pi_R$, or a blue 
singleton in $\pi_{\single}$. Consider the following cases:

(1) Agent $r$ joined $S \in \pi_R$ before Step \ref{step:outside2} of Algorithm \ref{alg:IS}. \\
By Lemmas~\ref{lem:default} and~\ref{lem:new}, agent $r$ weakly prefers $S$ to a mixed pair before 
accepting the deviation of red agents at Step \ref{step:outside2} of Algorithm \ref{alg:IS}, and by 
transitivity, this remains true after Step \ref{step:outside2} of Algorithm \ref{alg:IS}, meaning that 
$r$ does not have an incentive to join a blue singleton. Also, $r$ has no IS-deviation to a coalition 
in $\pi_R$ by Lemmata \ref{thm:mix1} and \ref{thm:mix2}.

(2) Agent $r$ joined $S \in \pi_R$ at Step \ref{step:outside2} of Algorithm \ref{alg:IS}. \\
By individual rationality, agent $r$ weakly prefers his coalition to forming his own singleton. Hence, if $r$ 
has an IS-deviation to a coalition of a single blue agent in $\pi_{\single}$, then it means that both 
agent strictly prefer a mixed pair to their own singleton coalitions and hence they would have formed a 
pair at Step \ref{step:pair}, a contradiction. Also, if agent $r$ has an IS-deviation to some coalition 
$T \in \pi_R$, then it means that no red agent joined $T$ at Step \ref{step:outside2} of Algorithm 
\ref{alg:IS} by Lemma \ref{lem}(c), and thus agent $r$ would have joined $T$ instead of $S$, a 
contradiction.

(3) Agent $r$ belongs to a coalition $S \in \pi_B$. \\
By construction of the algorithm, we have $p_r \leq \frac{1}{2}$, and 
agent $r$ weakly prefers $S$ to a mixed pair before accepting the deviation of blue agents at Step 
\ref{step:outside2} of Algorithm \ref{alg:IS}. By transitivity, this remains true after Step 
\ref{step:outside2} of Algorithm \ref{alg:IS}, which means that $r$ prefers his coalition 
to a coalition with blue ratio at most $\frac{1}{2}$. 
Thus, $r$ has no incentive to deviate to a coalition in $\pi_R$, or to a blue-only coalition.

(4) Agent $r$ belongs to a coalition in $\pi_{\pair}$. \\
Clearly, agent $r$ has no incentive to deviate to singleton coalitions of blue agents. 
Also, if $r$ has an IS-deviation to some coalition $T \in \pi_R$, $r$ would have joined $T$ at Step 
\ref{step:outside} of Algorithm \ref{alg:sub:IS}, a contradiction.

(5) Agent $r$ belongs to a coalition in $\pi_{\single}$. \\
Again, if $r$ has an IS-deviation to some coalition $T \in \pi_R$, $r$ could have joined $T$ at Step 
\ref{step:outside2} of Algorithm \ref{alg:IS}, a contradiction. Also, if $r$ has an IS-deviation to 
some coalition in $\pi_{\single}$ that consists of a single blue agent $b$, then both $r$ and $b$ could 
have formed a pair at Step \ref{step:pair}, a contradiction.


A symmetric argument applies to blue agents' deviations, and hence no blue agent has an IS-deviation to 
other coalitions.
\end{proof}

\section{Conclusion}
We have initiated the formal study of coalition formation games with varying degree of 
homophily and heterophily. Our results suggest several directions for future work. 

First, while we have argued that Nash stable outcomes may fail to exist, the complexity
of deciding whether a given diversity game admits a Nash stable outcome remains unknown.
Also, we have obtained an existence result for individual stability
under the assumption that agent's preferences are single-peaked, but it is not clear
if the single-peakedness assumption is necessary; in fact, we do not have an example
of a diversity game with no individually stable outcome. In a similar vein, it would be 
desirable to identify further special classes of diversity games that admit core stable
outcomes or at least efficient algorithms for deciding whether the core is non-empty.

More broadly, it would be interesting to extend our model to more than two agent types.
Another possible extension is to consider the setting where agents are located
on a social network, and each agent's preference over coalitions 
is determined by the fraction of her acquaintances in these coalitions;
this model would capture both the setting considered in our work 
and fractional hedonic games.

\bibliographystyle{ACM-Reference-Format}

\end{document}